\begin{document}

\theoremstyle{plain}
\newtheorem{theorem}{Theorem}
\newtheorem{lemma}[theorem]{Lemma}
\newtheorem{corollary}[theorem]{Corollary}
\newtheorem{conjecture}[theorem]{Conjecture}
\newtheorem{proposition}[theorem]{Proposition}

\theoremstyle{definition}
\newtheorem{definition}{Definition}

\theoremstyle{remark}
\newtheorem*{remark}{Remark}
\newtheorem{example}{Example}

\title{Classification of multipartite entanglement in all dimensions}   

\author{Gilad Gour}\email{gour@ucalgary.ca}
\affiliation{Institute for Quantum Information Science and 
Department of Mathematics and Statistics,
University of Calgary, 2500 University Drive NW,
Calgary, Alberta, Canada T2N 1N4} 
\affiliation{Department of Mathematics, University of California/San Diego, 
        La Jolla, California 92093-0112}
\author{Nolan R. Wallach}\email{nwallach@ucsd.edu}
\affiliation{Department of Mathematics, University of California/San Diego, 
        La Jolla, California 92093-0112}

\date{\today}

\begin{abstract} 
We provide a systematic classification of multiparticle entanglement in terms of equivalence classes of states under stochastic local operations and classical communication (SLOCC). We show that such an SLOCC equivalency class of states is characterized by ratios of homogenous polynomials that are invariant under local action of the special linear group.
We then construct the complete set of all such SL-invariant polynomials (SLIPs). Our construction is based on Schur-Weyl duality and applies to any number of qudits in all (finite) dimensions. In addition, we provide an elegant formula for the dimension of the homogenous SLIPs space of a fixed degree as a function of the number of qudits. The expressions for the SLIPs involve in general many terms, but for the case of qubits we also provide much simpler expressions.
\end{abstract}  

\pacs{03.67.Mn, 03.67.Hk, 03.65.Ud}

\maketitle


Multi-particle entanglement is an essential resource for a variety of quantum information processing tasks. These include conventional~\cite{Nie00} and measurement-based quantum computation~\cite{Rau01}, quantum error correction schemes~\cite{Nie00}, 
quantum secret sharing~\cite{Hil99}, quantum simulations~\cite{Llo96}, and in principle in any task involving entangled many-body quantum systems~\cite{Hor09,Ple07}.  These applications to quantum information, along with its intriguing properties and potential applications to condensed matter physics~\cite{Guh05}, sparked an enormous amount of literature dedicated to the classification of multipartite entanglement.  Nevertheless, our current understanding of multipartite entanglement is very limited, and besides few cases involving small systems, the resource theory of multipartite entanglement is still in its infancy~\cite{Wal12,GW11}. 

Entanglement is a non-local resource with which it is possible to overcome
the limitations imposed by local operations. Therefore, multipartite states are classified according to their inter-convertibility under stochastic local operations and classical communication (SLOCC)~\cite{Vid00}. That is, two states belong to the same entanglement class (or SLOCC equivalent class) if it is possible to reversibly convert one state to the other with non-zero probability using only LOCC.
While in three qubits there are only 6 inequivalent SLOCC classes, in four (or more) qubits there are already uncountable number of inequivalent SLOCC classes~\cite{Vid00,GW10}. This is a simple indication that the complexity in characterizing SLOCC classes grows rapidly as the number of particles increases.  

Initially, much of the literature focused on classification of SLOCC classes in terms of homogeneous polynomial functions of the coefficients of the pure multi-qudit state in question. If these polynomials on $N$-qudit systems are also invariant under $SL(d,\mathbb{C})^N$ (here $SL(d,\mathbb{C})$ is the set of all $d\times d$ complex matrices with determinant 1) then their absolute  values are measures of genuine multipartite entanglement~\cite{Ver03,Vla12}. These entanglement measures have many interesting and desirable properties, including initial-state-independence behavior under local noise~\cite{GG10,Vla12}. For pure two qubits and three qubits systems, the concurrence~\cite{Woo98} and the 3-tangle~\cite{CKW} are the unique polynomials of this kind. For four qubits, however, there are infinitely many such invariants and the set of all of them is generated by four invariants~\cite{Luq03,GW10}. Beyond that, and despite the extensive literature, very little is known about the set of all such SL-invariant polynomials (SLIPs) except for few techniques that were used to construct some of the SLIPs for multi-qubit states~\cite{Lei04,Ost05}. 

In this Letter we show that almost all SLOCC equivalent classes, including the dense set of stable SLOCC classes, can be distinguished by ratios of homogeneous SLIPs of the same degree. We then find an algorithm to construct \emph{all} of the SLIPs. Our main idea is to look at homogenous polynomials of a fixed degree $k$, and view them as vectors in a Hilbert space consisting of $k$-copies of the original Hilbert space. This identification enables us to use the Schur-Weyl duality and other techniques from representation and invariant theory to construct the full set of homogeneous SLIPs of degree $k$. Our technique can be applied to any number of qudits in all dimensions. For the case of qubits, we also find an elegant way to express many of the SLIPs.

We consider here the Hilbert space of $n$ qudits which we denote by:
$$
\mathcal{H}_n\equiv \mathbb{C}^{m_1}\otimes\mathbb{C}^{m_2}\otimes\cdots\otimes\mathbb{C}^{m_n}\;.
$$
Local (generalized) measurements on each of the $n$ qudits, can transform probabilistically an initial pure state $|\psi\rangle\in\mathcal{H}_n$  to some other pure state  $|\phi\rangle\equiv A_1\otimes\cdots\otimes A_n|\psi\rangle$, where $\{A_j\}_{j=1}^{n}$ are $m_j\times m_j$ matrices. If this transformation is reversible then all the matrices $\{A_j\}$ can be taken to be invertible~\cite{Vid00}, and $\psi$ and $\phi$ are said to be SLOCC equivalent. We are interested here in characterizing classes of SLOCC equivalent states.

Up to normalization, an invertible SLOCC map $A_1\otimes\cdots\otimes A_n$ can be described as an element of the group
$$
 G\equiv \text{SL}(m_1,\mathbb{C})\otimes\text{SL}(m_2,\mathbb{C})\otimes\cdots\otimes\text{SL}(m_n,\mathbb{C})
$$
where $SL(m_j,\mathbb{C})$ is the group consisting of all $m_j\times m_j$ complex invertible matrices with determinant 1. This last condition implies that the states $g|\psi\rangle$ for $g\in G$ are in general not normalized even if $|\psi\rangle$ is normalized. The orbit $G|\psi\rangle:=\{g|\psi\rangle\;|\;g\in G\}$ is therefore consisting of non-normalized states in the SLOCC equivalent class of $|\psi\rangle$. Working with non-normalized states enables us to characterize the distinct orbits $\{G|\psi\rangle\}$ with SLIPs, while the distinct SLOCC classes can be characterized by ratios of the SLIPs. 

A  polynomial $f:\mathcal{H}_{n}\to\mathbb{C}$ that satisfies
$$
f(g|\psi\rangle)=f(|\psi\rangle)\;\;\;\forall\;g\in G\;\;\text{and}\;\;\forall\;|\psi\rangle\in\mathcal{H}_{n}\;,
$$
is called $SL$-invariant polynomial (SLIP). The set of all SLIPs form a vector space over $\mathbb{C}$. 
We can always choose the basis of this polynomial space 
to consist of \emph{homogeneous} SLIPs~\footnote{Any polynomial can be written as
a linear combination of homogeneous polynomials, and the invariance follows from the fact that $G$ acting on $\mathcal{H}_n$ linearly.}.
For this reason, we will focus on homogeneous SLIPs of some degree $k\in\mathbb{N}$. 
The dimension of the space of homogeneous polynomials of fixed degree $k$ is finite, although we will see that it grows exponentially with $n$. 
In the supplemental material we show that the degree $k$ must be divisible by the least common 
multiple of the integers $m_1,m_2,...,m_n$. This implies for example that for qubits there are only SLIPs of even degree.

We show now that SLIPs can be used to determine whether two states in 
$\mathcal{H}_n$ belong to the same SLOCC class. Indeed, suppose that 
$|\psi\rangle,|\phi\rangle\in\mathcal{H}_n$ are two (normalized) states belonging to the same SLOCC class.
That is, there exists a phase $\theta\in[0,2\pi)$ and $g\in G$ such that
\begin{equation}\label{orbit}
|\psi\rangle=e^{i\theta}\frac{g|\phi\rangle}{\|g|\phi\rangle\|}\;.
\end{equation}
Suppose $f_k$ is a homogenous SLIP of degree $k$. Then,
\begin{equation}\label{psiphi}
f_k(|\psi\rangle)=\frac{e^{i\theta k}}{\|g|\phi\rangle\|^k}f_k(|\phi\rangle).
\end{equation}
Suppose now that $h_k$ is another homogeneous SLIP of the \emph{same} degree $k$. Since it also satisfies the relation~\eqref{psiphi}, we get
\begin{equation}\label{ratio}
\frac{f_k(|\psi\rangle)}{h_k(|\psi\rangle)}=\frac{f_k(|\phi\rangle)}{h_k(|\phi\rangle)}\;,
\end{equation}
assuming $h_k(|\psi\rangle)\neq 0$. Thus, if $|\psi\rangle$ and $|\phi\rangle$
belong to the same SLOCC class they must satisfy~\eqref{ratio} for any pair of two
homogeneous SLIPs of the same degree (this direction was pointed out recently in~\cite{Oli11}). Here we show that the converse also holds for almost all states in $\mathcal{H}_n$ (see details of the proof in the supplementary material): 
\begin{proposition}
Let $|\psi\rangle,|\phi\rangle\in\mathcal{H}_n$ be two stable states (i.e. states $|\psi\rangle$ whose orbits $G|\psi\rangle$ are closed).
Then, there exists $\theta\in[0,2\pi)$ and $g\in G$ such that~\eqref{orbit} holds if and only if~\eqref{ratio} holds for all homogeneous SLIPs of degree $k$ with $h_k(|\psi\rangle)\neq 0$.
\end{proposition}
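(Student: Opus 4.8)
The ``only if'' direction is exactly the computation from \eqref{orbit} to \eqref{ratio}, so the plan is to establish the converse: assuming \eqref{ratio} holds for every pair of homogeneous SLIPs of the same degree (with denominator nonvanishing at $|\psi\rangle$), produce $\theta$ and $g$ realizing \eqref{orbit}. The first step is to replace the family of ratio identities by a single global scaling factor. Since $|\psi\rangle$ is normalized it is nonzero, and since it is stable its orbit $G|\psi\rangle$ is closed and hence does not contain $0$; therefore $|\psi\rangle$ lies outside the null cone, so some homogeneous SLIP is nonzero at $|\psi\rangle$. The product of SLIPs of degrees $j$ and $k$ is a SLIP of degree $j+k$, so $f\mapsto f(|\psi\rangle)$ is an algebra homomorphism on the graded ring of SLIPs; combining this with \eqref{ratio} I would show that the ratio hypothesis is equivalent to the existence of $\lambda\in\mathbb{C}^{\ast}$ with $f_k(|\psi\rangle)=\lambda^{k}f_k(|\phi\rangle)$ for \emph{every} homogeneous SLIP $f_k$ of \emph{every} degree $k$. (The only delicate point is consistency of the exponent across degrees and the choice of a root defining $\lambda$; this is routine once one notes that the set of degrees in which some SLIP is nonvanishing at $|\psi\rangle$ is a sub-semigroup of $\mathbb{N}$, and recalls from the supplemental material that every SLIP has degree divisible by $\mathrm{lcm}(m_1,\dots,m_n)$.)

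Next I would absorb $\lambda$ into the state by setting $|\phi'\rangle:=\lambda|\phi\rangle$. Scalar multiplication commutes with $G$, so $G|\phi'\rangle=\lambda\,G|\phi\rangle$ is again closed and $|\phi'\rangle$ is stable, while by construction $f(|\psi\rangle)=f(|\phi'\rangle)$ for \emph{all} $G$-invariant polynomials $f$. At this point I would invoke the fundamental separation theorem of geometric invariant theory: $G=\mathrm{SL}(m_1,\mathbb{C})\times\cdots\times\mathrm{SL}(m_n,\mathbb{C})$ is reductive and acts linearly on the finite-dimensional space $\mathcal{H}_n$, so two points with closed $G$-orbits on which all invariant polynomials agree must lie on the same orbit (equivalently, each orbit closure contains a unique closed orbit, and the hypothesis forces $\overline{G|\psi\rangle}$ and $\overline{G|\phi'\rangle}$ to intersect). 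Hence $|\psi\rangle=g|\phi'\rangle=\lambda g|\phi\rangle$ for some $g\in G$.

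The last step is bookkeeping: taking norms in $|\psi\rangle=\lambda g|\phi\rangle$ and using $\||\psi\rangle\|=1$ gives $|\lambda|=1/\|g|\phi\rangle\|$, and writing $\lambda=e^{i\theta}/\|g|\phi\rangle\|$ with $\theta\in[0,2\pi)$ reproduces \eqref{orbit}.

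The representation-theoretic core is the separation of closed orbits by invariants, and this is precisely where stability is indispensable: for non-closed orbits the closures generically meet in a common closed orbit (for a linear action, at the origin), so inequivalent SLOCC classes can share all SLIP ratios. I expect the main obstacle is not this classical input but the first step --- the careful passage from equality of all same-degree ratios to one scalar $\lambda$ valid simultaneously in every degree, including the correct handling of the grading and of the root of $\lambda$.
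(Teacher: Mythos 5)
Your converse follows the same fundamental strategy as the paper's proof --- everything is reduced to the GIT separation theorem, that two closed orbits of a reductive group acting linearly coincide if and only if all invariant polynomials agree on them --- but you execute the one delicate step differently. The paper fixes a single degree $k$ with $h_k(\psi)\neq 0\neq h_k(\phi)$, sets $\lambda=h_k(\psi)/h_k(\phi)$, compares $(f_l)^k$ with $(h_k)^l$ to get $f_l(\psi)=\zeta_l\,f_l(\lambda^{1/k}\phi)$ with $\zeta_l^k=1$, shows by a kernel/linear-functional argument that $\zeta_l$ depends only on $l$, and then absorbs the residual roots of unity by enlarging the group to $\widetilde G=\mu_k G$ (whose invariants are the $G$-invariants of degree divisible by $k$) and applying orbit separation to $\widetilde G$; the stray $k$-th root of unity in $\psi=\zeta\lambda^{1/k}g\phi$ is swept into the constant $c$. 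You instead manufacture a single global scalar $\lambda$ with $f_k(\psi)=\lambda^k f_k(\phi)$ in \emph{every} degree. That is genuinely the crux, and you have deferred it, but it does close: the set $S$ of degrees carrying a SLIP nonvanishing at $\psi$ is an additive sub-semigroup, $k\mapsto h_k(\psi)/h_k(\phi)$ is a well-defined homomorphism $S\to\mathbb{C}^{\times}$ (well-definedness is exactly the ratio hypothesis), a numerical semigroup with $\gcd$ equal to $d$ contains $Nd$ and $(N+1)d$ for large $N$, so $\mu=\lambda_{(N+1)d}/\lambda_{Nd}$ satisfies $\lambda_{nd}=\mu^{n}$ throughout $S$ and any $d$-th root of $\mu$ serves as $\lambda$; and for a degree $k\notin S$ you must separately check that every degree-$k$ SLIP also vanishes at $\phi$, which follows by comparing $(f_k)^{k'}$ with $(h_{k'})^{k}$ at the common degree $kk'$ for some $k'\in S$. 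With those details written out your argument is complete, and it is arguably cleaner than the paper's: it avoids the auxiliary group $\widetilde G$ and the root-of-unity bookkeeping entirely, at the price of the semigroup-extension lemma that the $\widetilde G$ construction was designed to circumvent. Your remark on why stability is indispensable (orbit closures of unstable states meeting in a common closed orbit) matches the paper's remark on unique closed orbits in closures.
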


The set of all stable states is open and dense in $\mathcal{H}_n$. It contains almost all states in $\mathcal{H}_n$ including many non-generic states. In particular, an orbit $G|\phi\rangle$ is stable if and only if it contains 
a state with the property that the reduced density matrix 
on each single qudit (after tracing out the other $n-1$ qudits) is proportional to 
the identity~\cite{GW11,GW10}. Thus, many interesting and physically relevant states are stable. This includes all quantum error-correcting code states, all GHZ states, all cluster states, and in general all graph states. Hence, our construction below of all SLIPs provides an almost-complete classification of $\mathcal{H}_n$ into SLOCC equivalent classes.

SLIPs involve in general cumbersome expressions. However, for the case of $n$ qubits there exists
a large class of polynomials that can be expressed very elegantly. The main reason for that is the existence of the matrix
\begin{equation}\label{J}
J\equiv-i\sigma_y=\left(\begin{array}{cc}0&1\\-1&0\end{array}\right)
\end{equation}
for which any $2\times 2$ matrix $A\in SL(2,\mathbb{C})$ satisfies
$$
A^TJA=J\;,
$$
where $A^T$ denotes the transposed matrix of $A$. In higher dimensions, such an invariant matrix $J$ that satisfies the above equation for all
$A\in SL(d,\mathbb{C})$ does not exist.
Thus, before introducing our main results, 
we first show that this property of $J$ leads to a large class of SLIPs in the qubit case where
$m_j=2$ for all $j=1,2,...,n$.

The above property of $J$ implies that for any two states 
$|\psi\rangle,|\phi\rangle\in\mathcal{H}_n$, and any $g\in G$,
$$
(g\psi,g\phi)_n=(\psi,\phi)_n\;,
$$
where the bilinear form $(\cdot,\cdot)_n$ is defined by
\begin{equation}\label{blin}
(\psi,\phi)_n\equiv\langle\psi^*|J\otimes\cdots\otimes J|\phi\rangle\;.
\end{equation}
Here $J$ appears $n$-times, and the vector $|\psi^*\rangle$ is the complex conjugate
of $|\psi\rangle$ when it is expressed in a fixed basis such that each $J$ has the form~(\ref{J}).
Therefore, $(\psi,\psi)_n$ is an homogenous SLIP of degree 2. Note however that for odd $n$, $(\psi,\psi)_n=0$
since in that case $J^{\otimes n}$ is a skew-symmetric matrix~\footnote{This is consistent with the fact that for odd number of qubits the lowest degree of a (non-zero) homogenous SLIP is 4}. 

For any choice of $q$ qubits in $\mathcal{H}_n$ with $1\leq q<n$, we associate a bipartite cut $\mathcal{A}_q\otimes\mathcal{B}_{n-q}$ between the chosen $q$ qubits,$A_q$, and the remaining $n-q$ qubits, $B_{n-q}$.
For a given $q$ there are ${n\choose q}$
bipartite cuts of the form $\mathcal{H}_n\cong\mathcal{A}_q\otimes\mathcal{B}_{n-q}$. 
Now,
fix a bipartite cut $\mathcal{A}_q\otimes\mathcal{B}_{n-q}$. Any state $|\psi\rangle\in \mathcal{A}_q\otimes\mathcal{B}_{n-q}$
can be written as 
\begin{equation}\label{psi}
|\psi\rangle=\sum_{j=1}^{2^q}\sum_{k=1}^{2^{n-q}}a_{jk}|u_j\rangle|v_k\rangle\;,
\end{equation}
where $\{|u_j\rangle\}$ and $\{|v_k\rangle\}$ are orthonormal bases of $\mathcal{A}_q$ 
and $\mathcal{B}_{n-q}$, respectively. For any such $|\psi\rangle$, we denote by $A$ the
$2^q\times 2^{n-q}$ matrix whose elements are $a_{jk}$. 
Furthermore, we denote by $U$
the $2^q\times 2^q$ matrix whose elements are $U_{jj'}=(u_j,u_{j'})_q$, and by $V$ the
$2^{n-q}\times 2^{n-q}$ matrix whose elements are $V_{kk'}=(v_k,v_{k'})_{n-q}$,
where $({\bf \cdot},{\bf\cdot})$ is the bilinear form defined in Eq.(\ref{blin}). With this notation we get (see more details in the supplementary material) that for any $\ell\in\mathbb{N}$ the function
\begin{equation}
\label{mn}
 f_{\ell}^{\mathcal{A}_q |\mathcal{B}_{n-q}}(\psi)\equiv{\rm Tr}\left[\left(UAVA^T\right)^{\ell}\right]
 \end{equation}
 is an SLIP of degree $2\ell$.
The function $ f_{\ell}^{\mathcal{A}_q |\mathcal{B}_{n-q}}(\psi)$ is invariant under a change of bases in $\mathcal{A}_q$ 
and $\mathcal{B}_{n-q}$. Moreover, since the form $(\cdot,\cdot)_q$ is symmetric for even $q$, if $q$ is even we can choose the basis $\{|u_j\rangle\}$ such that the matrix $U=I$.
Similarly, if $n-q$ is even then we can choose the basis $\{|v_k\rangle\}$ such that the matrix $V$ is the identity.

The SLIPs defined above are not all independent. 
Different bipartite cuts can lead to the same polynomial. 
For example, in the case of even number of qubits, all the polynomials $f_{\ell=1}^{\mathcal{A}_q |\mathcal{B}_{n-q}}(\psi)$,
corresponding to ${n\choose q}$ 
bipartite cuts $\mathcal{A}_q\otimes\mathcal{B}_{n-q}$, are the same (up to a constant factor) since
there is only one SLIP of degree 2 given by $(\psi,\psi)_n$~\cite{W}. 
It is therefore natural to ask if all SLIPs can be written as linear combinations of $\{f_{\ell}^{\mathcal{A}_q |\mathcal{B}_{n-q}}\}$. It is true for degrees 2 and 4, but already in 5 qubits there exists (see below) a unique polynomial of degree 6, whereas our polynomials $f_{\ell}^{\mathcal{A}_q |\mathcal{B}_{n-q}}$ can not be of degree 6 if $n$ is odd. We therefore need a more systematic way to construct all SLIPs.

We are now ready to introduce our main technique to
compute the $G$-invariants in the space of polynomials on $\mathcal{H}_n$ of
degree $k$. We denote the set of all these SLIPs by $I_{k,\mathbf{m}}$
where $\mathbf{m}=(m_{1},...,m_{n})$ is the vector of the dimensions of the $n$-qudits. Observe that any homogeneous polynomial on $\mathcal{H}_n$ of degree $k$ can be written as an inner product between some vector $|v\rangle\in \otimes^{k}\mathcal{H}_n$ and $k$-copies of a vector $|x\rangle\in\mathcal{H}_n$:
\begin{equation}\label{fv}
f_{v}(x)=\left\langle v|\otimes^{k}x\right\rangle\;.
\end{equation}
Hence, any homogeneous polynomial on $\mathcal{H}_n$ of degree $k$  corresponds to some 
(not necessarily unique) vector $|v\rangle\in\otimes^k\mathcal{H}_n$. Note that if we also have $\left(\otimes^k g\right)|v\rangle=|v\rangle$ for all $g\in G$, then the polynomial $f_v(x)$ is $G$-invariant. Denoting by 
$\left(\otimes^{k}\mathcal{H}_n\right)  ^{G}$ all such $G$-fixed vectors of
$\otimes^{k}\mathcal{H}_n$, we conclude: 
$$
\left\{f_{v}\;\big|\;v\in\left(  \otimes^{k}H\right)  ^{G}\right\}=I_{k,\mathbf{m}}.
$$
Note however that in general distinct vectors in $\otimes^{k}\mathcal{H}_n$ can lead to the \emph{same} polynomial and, in particular, we will see below that
$\dim\left(  \otimes^{k}H\right)  ^{G}$ is much larger than $\dim
I_{k,\mathbf{m}}$. 

We now show how this view of SLIPs leads to an algorithm for finding all of them. We first write out $v$ in terms of the standard basis in each tensor slot. That is,%
\[
v=\sum a_{i_{1}i_{2}\cdots i_{k}}\left\vert i_{1}i_{2}\cdots i_{k}%
\right\rangle
\]
and each $|i_{j}\rangle\in\mathcal{H}_n$ is an element in the computational basis of $\mathcal{H}_n$, i.e. it is of the form
$$
|i_j\rangle=\left\vert b_{1j}b_{2j}\cdots b_{nj}\right\rangle 
$$ with $1\leq b_{sj}\leq m_{s}$ for $s=1,2,...,n$.
Writing
\[
|x\rangle=\sum x_{i}\left\vert i\right\rangle
\]
relative to the same computational basis of $\mathcal{H}_n$ we have%
\[
f_{v}(x)=\sum\overline{a_{i_{1}i_{2}\cdots i_{k}}}x_{i_{1}}x_{i_{2}}\cdots
x_{i_{k}}.
\]
This shows that if we symmetrize the elements relative to the action of
$S_{k}$ (permuting the factors) then the symmetric elements yield
$I_{k,\mathbf{m}}$ unambiguously. In what follows, we will view 
$\otimes^{k}\mathcal{H}_n$ as a representation space of $G$ with the homomorphism $g\to\otimes^k g$ for any $g\in G$. 

\begin{figure}[tp]
\includegraphics[scale=.40]{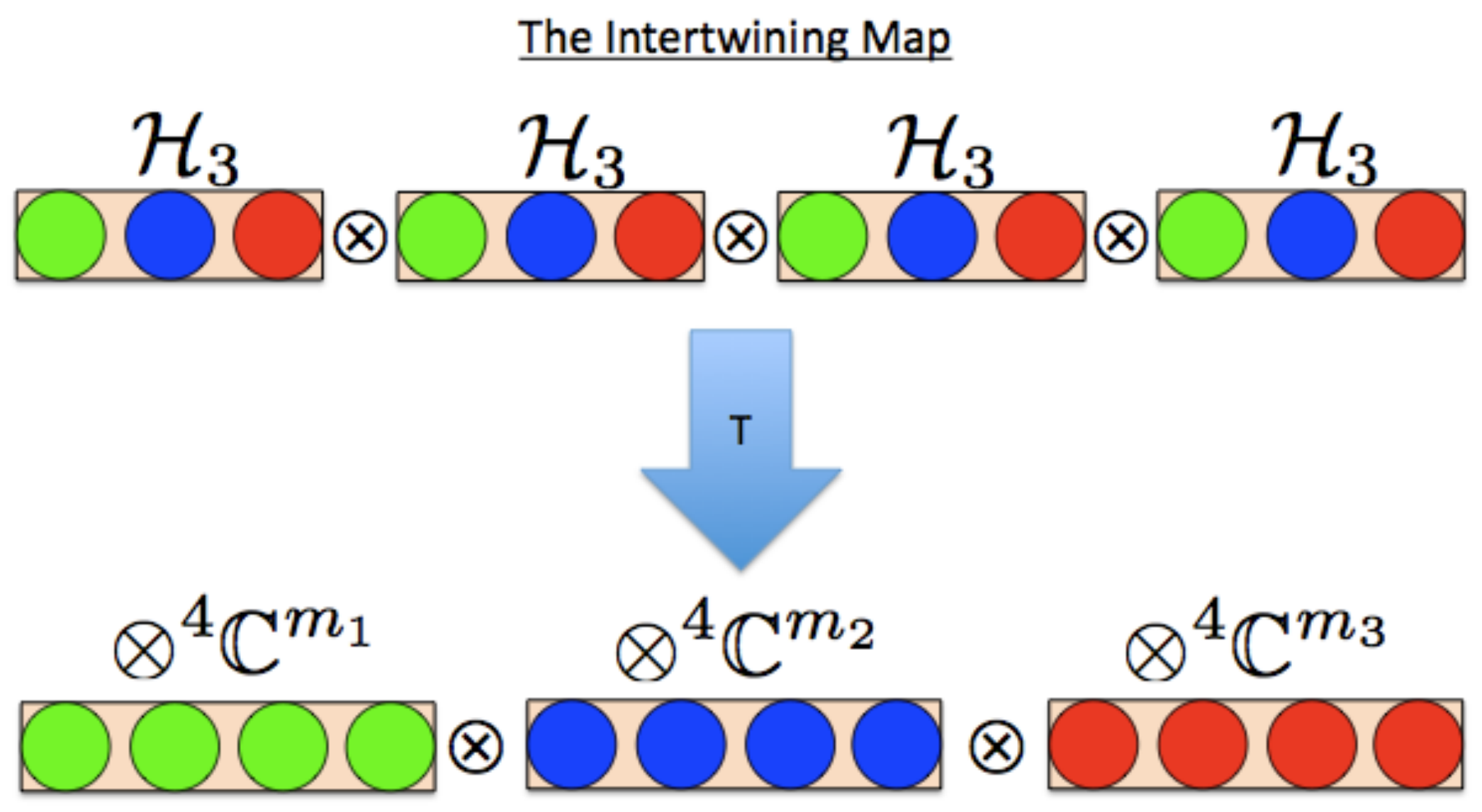}
\caption{The intertwining operator, $T$, for $n=3$ and $k=4$. The Hilbert space of the three qudits is denoted by
$\mathcal{H}_{3}=\mathbb{C}^{m_1}\otimes\mathbb{C}^{m_2}\otimes\mathbb{C}^{m_3}$, where the three colours correspond to the three dimensions $m_1$, $m_2$, and $m_3$.}
\end{figure}

Our task is to find $\left(  \otimes^{k}\mathcal{H}_n\right)  ^{G}$. 
The key observation is that the vector space $\otimes^{k}\mathcal{H}_n$
is isomorphic to the vector space
\[
\left(  \otimes^{k}\mathbb{C}^{m_{1}}\right)  \otimes\left(  \otimes
^{k}\mathbb{C}^{m_{2}}\right)  \otimes\cdots\otimes\left(  \otimes
^{k}\mathbb{C}^{m_{n}}\right)\;.
\]
The isomorphism map
(which is a `transpose'-type map) 
maps the computational basis element $\left\vert i_{1}
i_{2}\cdots i_{k}\right\rangle$ to the basis element $\left\vert i^{1}i^{2}...i^{n}%
\right\rangle $, where
\begin{align}
& |i_j\rangle\equiv\left\vert b_{1j}b_{2j}\cdots b_{nj}\right\rangle\in\mathcal{H}_n\nonumber\\
& |i^{s}\rangle\equiv\left\vert b_{s1}b_{s2}\cdots b_{sm_s}\right\rangle \in\otimes
^{k}\mathbb{C}^{m_{s}},
\end{align}
and for each $1\leq s\leq n$, $\{|b_{sj}\rangle\}_{j=1}^{m_s}$ is the standard basis
of $\mathbb{C}^{m_{s}}$.
This map is an intertwining operator (see Fig. 1). We will denote it by
\[
|v\rangle\longmapsto |v^{T}\rangle.
\]

Quite remarkably, under this intertwining map, the problem of finding
$\left(  \otimes^{k}\mathcal{H}_n\right)  ^{G}$ is reduced to finding
$\left(  \otimes^{k}\mathbb{C}^{m_s}\right)  ^{SL(m_s,\mathbb{C)}}$ for each $s=1,2,...,n$. To see it, we use the same notation for the inverse map from $\left(  \otimes
^{k}\mathbb{C}^{m_{1}}\right)  \otimes\left(  \otimes^{k}\mathbb{C}^{m_{2}
}\right)  \otimes\cdots\otimes\left(  \otimes^{k}\mathbb{C}^{m_{n}}\right)  $
to $\otimes^{k}\mathcal{H}_n$. Let $P_{m,k}$ denote the orthogonal
projection of $\otimes^{k}\mathbb{C}^{m}$ onto $\left(  \otimes^{k}
\mathbb{C}^{m}\right)  ^{SL(m,\mathbb{C)}}$ (which is $0$ unless $k$ is
divisible by $m).$ The map
\begin{equation}\label{m1}
P(v)=\left(  P_{m_{1},k}\otimes P_{m_{2},k}\otimes\cdots\otimes P_{m_{n}
,k}(v^{T})\right)  ^{T}
\end{equation}
is the orthogonal projection from $\otimes^{k}\mathcal{H}_n$ onto $\left(  \otimes
^{k}\mathcal{H}_n\right)  ^{G}$. This is a
simple consequence of the fact that a state being invariant under
$g^{\otimes k}$ has to be in the span of $\{|v_1\rangle,...,|v_n\rangle\}$, where
$|v_i\rangle$ are elements of $\left(  \otimes^{k}\mathbb{C}^{m_i}\right)  ^{SL(m_i,\mathbb{C)}}$.

It is therefore left to compute $P_{m,k}$. This is a classic problem in representation theory and can be solved using the Schur-Weyl duality. The Schur-Weyl duality
relates the irreducible representations (irreps) of $SL(m,\mathbb{C})$ 
(when acting on $\otimes^{k}\mathbb{C}^{m}$) 
and the symmetric group on $k$ elements, $S_k$ 
(with the natural action on $\otimes^{k}\mathbb{C}^{m}$).
Recall that $k$ must be of the form $k=mr$ for some $r\in\mathbb{N}$ in order
to have $P_{m,k}\neq0.$ Let $\chi_{\lambda}$ be the character of $S_{k}$
corresponding to the partition of $k$ given by $\lambda=(r,r,...,r)$ ($m$ $r$'s). Then (see more details in the supplementary material),
\begin{equation}\label{m2}
P_{m,k}=\frac{d_{\lambda}}{k!}\sum_{\sigma\in S_{k}}\chi_{\lambda}
(\sigma)\sigma\;,
\end{equation}
where $d_\lambda$ is the dimension of the irrep corresponding to the partition $\lambda$.
Equations~\eqref{m1} and~\eqref{m2} determine the projection of $\otimes^{k}\mathcal{H}_n$
onto $\left(  \otimes^{k}\mathcal{H}_n\right)  ^{G}$, and thereby via Eq.~\eqref{fv} provide the complete set of \emph{all} SLIPs of degree $k$. 

We now give a simple example, illustrating how to use Eq.(\ref{m2}) to construct SLIPs.
More examples with more details can be found in the supplementary material.
Consider a system consisting of $n$ qudits each of dimension $m=4$.
The smallest degree possible in this case is $k=4$ for which $r=1$. Thus, in this case $P_{m,k}$ is a projection on $\mathbb{C}^4\otimes\mathbb{C}^4\otimes\mathbb{C}^4\otimes\mathbb{C}^4$. From Eq.~\eqref{m2} we can define $P_{m,k}$ in terms of its action on basis elements as
\[
v_{1}\otimes v_{2}\otimes v_{3}\otimes v_{4}\longmapsto\frac{1}{24}%
\sum_{\sigma\in S_{4}}sgn(\sigma)v_{\sigma1}\otimes v_{\sigma2}\otimes
v_{\sigma3}\otimes v_{\sigma4}.
\]
Substituting this $P_{m,k}$ into Eq.(\ref{m1}) gives the degree $4$ invariant
for an arbitrary number of $n$ qudits each of dimension four. 

The number of terms in the sum of~\eqref{m2} is $k!$, which already for $k=6$ gives $720$ terms. Therefore, using fundamental concepts from invariant theory, we find (see supplementary material) an alternative way to express all the SLIPs of $n$ qubits. This alternative way (which only applies for the qubit case) is slightly more computationally efficient when $k$ is small. As an example we give an explicit formula for the unique SLIP of degree 6 in five qubits (which can not be written in the form~\eqref{mn}).

Our techniques also enable us to write down a simple algorithm to compute the dimension, $d(k,n)$, of the space of SLIPs of degree $k$ in $n$ qubits. We get for
example:
\begin{align*}
& d(2,n)=\frac{1}{2}(1+(-1)^n)\;\;\;\;,\;\;\;\;
 d(4,n)=\frac{2^{n-1}+(-1)^n}{3}\;,\\
& d(6,n)=\frac{1}{144}\left(36+44\cdot(-1)^n+8\cdot2^n+3\cdot(-3)^n+5^{n-1}\right)
\end{align*}
In supplementary material we attach a program in mathematica that computes
$d(k,n)$ for large $n$ and $k$. Moreover, in the more general case of $n$ \emph{qudits}, with $m_1=\cdots=m_n\equiv m$,
the dimension $d(k,n)$ of the space of SLIPs of degree $k=mr$ satisfies (for $k>4$)
$$
\lim_{n\to\infty}\frac{d(k,n)}{(C_{m-1,r})^n}=\frac{1}{k!},\;C_{m-1,r}\equiv (mr)!\frac{\prod_{1 \leq i<j \leq m} (j-i)}{\prod_{j=0}^{m-1}(r+j)!}
$$
where $C_{m,r}$ is the generalization of the $r^\text{th}$ Catalan number (see the supplementary material for more details). 
For $k=4$ (which is only possible in the qubit $m=2$ case) the limit above is 1/6.
The distinction between $k=4$ and $k>4$ follows from the fact that the alternating group on $n$ letters is simple (i.e. without non-trivial normal subgroups) if and only if $k>4$. 
This formula indicates that the dimension of the space of SLIPs grows exponentially both in $n$ and in $k$. 

In conclusion, we showed that SLIPs on $n$ qudits can be used to 
classify almost all SLOCC classes. We constructed a large class of elegant SLIPs for the case of $n$ qubits and then introduced a general technique to construct all SILPs in all dimensions.
Instead of fixing the number of qudits and then trying to calculate all SLIPs in any degree, we fixed the degree $k$ and then where able to calculate all SLIPs of that degree in any number of qudits. This approach also enabled us to calculate the dimensions of the space of homogeneous SLIPs of a given degree.

\emph{Acknowledgments:---}
The authors are grateful for Markus Grassl for many fruitful comments on the first version of the paper.
G.G. research is partially supported by
NSERC and by the Air Force Office of Scientific Research
as part of the Transformational Computing in Aerospace
Science and Engineering Initiative under grant FA9550-12-1-0046.
N.W. research is partially supported by an NSF summer grant.

\begin{titlepage}
\center{\large\textbf{Supplementary Information}\\}
\center{~{ }\\}

\end{titlepage}

\onecolumngrid

\appendix

\section{(i) The degrees of SLIPs}

Let the notion be as in the main text. Recall that we use the acronym SLIPs
for SL invariant polynomials In particular, the set of all homogeneous SLIPs
of degrees $k$ is denoted by $I_{k,\mathbf{m}}$ where $\mathbf{m}%
=(m_{1},...,m_{n})$ is the vector of the dimensions of each of the $n$%
-qudits.

\begin{proposition}
Let $r=lcm(m_{1},...,m_{n})$. Then $I_{k,\mathbf{m}}=0$ unless $k=qr$
with $q$ a non-negative integer.
\end{proposition}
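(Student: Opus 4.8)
The plan is to exploit the action of the \emph{center} of $G$, since the scalar matrices in each $SL(m_j,\mathbb{C})$ are exactly the $m_j$-th roots of unity. Concretely, for each index $j\in\{1,\dots,n\}$ let $\zeta_j=e^{2\pi i/m_j}$, a primitive $m_j$-th root of unity, and consider the matrix $\zeta_j I_{m_j}$ acting on the $j$-th tensor factor $\mathbb{C}^{m_j}$. Its determinant is $\zeta_j^{m_j}=1$, so $\zeta_j I_{m_j}\in SL(m_j,\mathbb{C})$, and hence the element $g_j\equiv I_{m_1}\otimes\cdots\otimes\zeta_j I_{m_j}\otimes\cdots\otimes I_{m_n}$ lies in $G$. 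As $g_j$ acts on $\mathcal{H}_n$ simply as multiplication by the scalar $\zeta_j$, we have $g_j|\psi\rangle=\zeta_j|\psi\rangle$ for every $|\psi\rangle\in\mathcal{H}_n$.

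Next I would feed these special group elements into the invariance and homogeneity conditions. Let $f\in I_{k,\mathbf{m}}$ be a nonzero homogeneous SLIP of degree $k$. Then $f(|\psi\rangle)=f(g_j|\psi\rangle)=f(\zeta_j|\psi\rangle)=\zeta_j^{\,k}\,f(|\psi\rangle)$ for all $|\psi\rangle$, the first equality by $G$-invariance and the last by homogeneity of degree $k$. Choosing $|\psi\rangle$ with $f(|\psi\rangle)\neq 0$ (possible since $f\not\equiv 0$) forces $\zeta_j^{\,k}=1$, and since $\zeta_j$ has multiplicative order exactly $m_j$ this means $m_j\mid k$. Letting $j$ range over $1,\dots,n$ shows $k$ is divisible by every $m_j$, hence by $r=\mathrm{lcm}(m_1,\dots,m_n)$, so $k=qr$ for some non-negative integer $q$. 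The restriction to homogeneous $f$ is harmless, since a general SLIP decomposes as a sum of homogeneous SLIPs, each of which is itself a SLIP (as already noted in the main text).

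I do not expect a real obstacle here; the proof is a two-line application of representation-theoretic bookkeeping. The only points that require a little care are (i) checking that the relevant scalar matrices genuinely lie in $SL(m_j,\mathbb{C})$ — it is precisely the determinant-one constraint that pins the admissible scalars down to the $m_j$-th roots of unity — and (ii) choosing a \emph{primitive} $m_j$-th root $\zeta_j$, so that $\zeta_j^{k}=1$ yields the sharp divisibility $m_j\mid k$ rather than a weaker condition. One could package the same idea using one-parameter diagonal subgroups of $SL(m_j,\mathbb{C})$ together with the weight decomposition of the $G$-action on polynomials, but the central-element argument above is the most economical route.
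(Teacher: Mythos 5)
Your proof is correct and follows essentially the same route as the paper's: both exploit the center of each $SL(m_j,\mathbb{C})$, acting by an $m_j$-th root of unity on $\mathcal{H}_n$, and combine $G$-invariance with homogeneity to force $\zeta_j^k=1$ and hence $m_j\mid k$ for every $j$. Your version is in fact slightly more careful than the paper's, since you explicitly pick a \emph{primitive} root of unity and a state on which $f$ does not vanish.
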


\begin{proof}
Recall that the center of $SL(m,\mathbb{C)}$ is the set of scalar matrices $%
\{\zeta I|\zeta^{m}=1\}$. Therefore, if $f_{k}\in I_{k,\mathbf{m}}$ then 
\begin{equation*}
f_{k}(|\psi\rangle)=f_{k}(\zeta|\psi\rangle)=\zeta^{k}f_{k}(|\psi\rangle), 
\end{equation*}
for any $|\psi\rangle\in\mathcal{H}_{n}$ and any $\zeta\in\mathbb{C}$ such
that $\zeta^{m_{j}}=1$ for some $1\leq j\leq n$. The first equality follows
from the $G$-invariance of $f_{k}$ and the second equality follows from its
homogeneity. Thus, each $m_{j},1\leq j\leq n,$ must divide $k$.
\end{proof}

\subsection{(ii) SLOCC classes can be distinguished by SLIPs}

\begin{definition}
A state $|\psi\rangle\in\mathcal{H}_n$ is said to be \emph{stable} if its orbit $G|\psi\rangle$ is closed.
\end{definition}
The next proposition will explain this terminology.
Almost all states in~$\mathcal{H}_n$ are stable (see for example~\cite{W}). 
Moreover, stable orbits have the following characterization~\cite{GW11,GW10}:
Let $|\psi\rangle\in\mathcal{H}_n$. Then, $|\psi\rangle$ is stable if and only if there exists a (non-normalized) state
$|\phi\rangle\in G|\psi\rangle$ with the property that all its $n$ local reduced density matrices are proportional to the identity matrix. Therefore, for example, all states in a code space of an error correcting code that is capable of correcting one erasure error must be stable.

The set of states that are not stable is also interesting and is of measure zero in $\mathcal{H}_n$. It contains the null cone.
The null cone is the set of states on which all SLIPs vanish. Therefore, states and orbits in the null cone can not be characterized with SLIPs. However, there are also states that are semi-stable in the sense that they are not stable, but they
are also not in the null cone. The following lemma provides a characterization for such semi-stable states.

\begin{proposition}\label{lem2}
Let $|\psi\rangle\in\mathcal{H}$ be a non-stable state, and suppose $|\psi\rangle$ is not in the null cone. 
Then, there exists a stable state $|\eta\rangle$ and a state in the null cone $|\zeta\rangle$ such that
$$
|\psi\rangle=|\eta\rangle+|\zeta\rangle\;.
$$
Moreover, denote by $G_\eta\equiv\left\{h\in G\;\big|\;h|\eta\rangle=|\eta\rangle\right\}$ 
the stabilizer group of $|\eta\rangle$. Then, there exists a group homomorphism 
$\varphi\;:\;\{z\in\mathbb{C}\;\big|\;z\neq 0\}\to G_\eta$ such that
$$
\lim_{z\to 0}\varphi(z)|\zeta\rangle=0\;.
$$
\end{proposition}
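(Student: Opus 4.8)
\emph{Proof idea.} The argument I would give rests on the Hilbert--Mumford criterion from geometric invariant theory, applied to the linear action of $G$ on the finite-dimensional space $\mathcal{H}_n$ (note that $G$, being a product of special linear groups, is reductive). Since $|\psi\rangle$ is not in the null cone it is semistable, so the orbit closure $\overline{G|\psi\rangle}$ contains a unique closed $G$-orbit $Z$ (an orbit of minimal dimension inside the closure is closed, and closedness forces uniqueness). This $Z$ cannot be $\{0\}$: otherwise $0\in\overline{G|\psi\rangle}$, and since every homogeneous SLIP of positive degree is continuous and constant on $G$-orbits, it would be constant on $\overline{G|\psi\rangle}$, giving $f(|\psi\rangle)=f(0)=0$ for all such $f$ and placing $|\psi\rangle$ in the null cone, contrary to hypothesis. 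Because $|\psi\rangle$ is not stable, $G|\psi\rangle$ is not closed, hence $Z$ is contained in $\overline{G|\psi\rangle}\setminus G|\psi\rangle$. The Hilbert--Mumford criterion for orbit-closure membership (see, e.g., \cite{W}) then supplies a one-parameter subgroup $\lambda:\mathbb{C}\setminus\{0\}\to G$ such that $\lim_{z\to 0}\lambda(z)|\psi\rangle$ exists and lies in $Z$; I take $|\eta\rangle$ to be this limit, which is stable because $G|\eta\rangle=Z$ is closed by construction.

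Next I would extract $|\zeta\rangle$ from the weight-space decomposition of $\lambda$. Write $\mathcal{H}_n=\bigoplus_{j\in\mathbb{Z}}V_j$, where $\lambda(z)$ acts on $V_j$ by multiplication by $z^{j}$, and decompose $|\psi\rangle=\sum_{j}v_{j}$ with $v_{j}\in V_{j}$. Then $\lambda(z)|\psi\rangle=\sum_{j}z^{j}v_{j}$, so the existence of the limit as $z\to 0$ forces $v_{j}=0$ for all $j<0$, and the limit equals $v_{0}$; thus $|\eta\rangle=v_{0}$. Setting $|\zeta\rangle\equiv|\psi\rangle-|\eta\rangle=\sum_{j\geq 1}v_{j}$ gives $|\psi\rangle=|\eta\rangle+|\zeta\rangle$, and with $\varphi\equiv\lambda$ two facts are immediate: $\varphi(z)|\eta\rangle=\lambda(z)v_{0}=v_{0}=|\eta\rangle$ for all $z\neq 0$, so $\varphi$ is a group homomorphism with image in the stabilizer $G_{\eta}$; and $\varphi(z)|\zeta\rangle=\sum_{j\geq 1}z^{j}v_{j}\to 0$ as $z\to 0$, which is the required limit.

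It remains to see that $|\zeta\rangle$ lies in the null cone. For any homogeneous SLIP $f$ of positive degree, $G$-invariance gives $f(|\zeta\rangle)=f(\lambda(z)|\zeta\rangle)$ for every $z\neq 0$; letting $z\to 0$ and using continuity of $f$ together with $f(0)=0$ yields $f(|\zeta\rangle)=0$. Since this holds for every homogeneous SLIP of positive degree, whose common zero set is exactly the null cone, $|\zeta\rangle$ is in the null cone, which completes the proof. I expect the only non-routine step to be the Hilbert--Mumford input in the first paragraph --- that a closed orbit inside $\overline{G|\psi\rangle}$ can be approached from $|\psi\rangle$ along a single one-parameter subgroup, the refinement of the familiar ``$0\in\overline{Gx}$'' criterion to an arbitrary closed orbit in the closure. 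Everything afterward is elementary bookkeeping with weight spaces and the $G$-invariance of SLIPs.
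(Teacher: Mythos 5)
Your proposal is correct and follows essentially the same route as the paper's own (very terse) proof: invoke the generalized Hilbert--Mumford theorem to obtain a one-parameter subgroup $\lambda$ with $\lim_{z\to 0}\lambda(z)|\psi\rangle=|\eta\rangle$ stable, observe that $\lambda$ fixes $|\eta\rangle$, and read off $|\zeta\rangle$ from the weight decomposition. The only difference is that you supply the details the paper leaves implicit (uniqueness of the closed orbit in the closure, why it is not $\{0\}$, and why $|\zeta\rangle$ lies in the null cone), all of which are handled correctly.
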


\begin{proof}
This is a consequence of the (generalized) Hilbert-Mumford theorem (c.f.~\cite{GIT}). The theorem asserts that there exists
$\phi: \mathbb{C}^\times \rightarrow G$ an algebraic group homomorphism such that the limit of
$\phi(z)\left\vert \psi \right\rangle$ as $z \rightarrow 0$ is a stable state  $\left\vert\eta\right\rangle$. It is
obvious that $\phi(z)\left\vert \eta \right\rangle = \left\vert\eta\right\rangle$ for all $z$.
The theorem follows from these assertions.
\end{proof}
We are now ready to introduce the main proposition of this subsection.
\begin{proposition}\label{prop2}
Let $|\psi_1\rangle,|\psi_2\rangle\in\mathcal{H}_{n}$ be two normalized stable states in $\mathcal{H}_n$. Then, for
all $k\in\mathbb{Z}_{>0}$, and for all $f_{k},h_{k}\in I_{k,\mathbf{m}}$
such that $h_{k}(|\psi_1\rangle)\neq0$, 
\begin{equation*}
\frac{f_{k}(|\psi_1\rangle)}{h_{k}(|\psi_1\rangle)}=\frac{f_{k}(|\psi_2\rangle )}{%
h_{k}(|\psi_2\rangle)}
\end{equation*}
if and only if there exists $g\in G$ and a phase $\theta\in\lbrack0,2\pi]$
such that 
\begin{equation}\label{slo}
|\psi_1\rangle=e^{i\theta}\frac{g|\psi_2\rangle}{\left\Vert g|\psi_2\rangle
\right\Vert }. 
\end{equation}
Moreover, if $|\psi_1\rangle$ and $|\psi_2\rangle$ are not stable then the proposition still holds if $|\psi_1\rangle$ and $|\psi_2\rangle$ in Eq.(\ref{slo}) are replaced with $|\eta_1\rangle$ and $|\eta_2\rangle$, respectively. Here for $j=1,2$
$|\psi_j\rangle=|\eta_j\rangle+|\zeta_j\rangle$ as in Proposition~\ref{lem2}. 
\end{proposition}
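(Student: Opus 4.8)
The plan is to reduce Proposition~\ref{prop2} to a statement purely about closed orbits, where invariant-theoretic separation results apply, and then handle the phase separately. The ``only if'' direction is already established in the main text (equations~\eqref{orbit}--\eqref{ratio}), so the content is the ``if'' direction. First I would observe that the hypothesis $h_k(|\psi_1\rangle)\neq 0$ for some $k$ and some $h_k\in I_{k,\mathbf{m}}$ guarantees that $|\psi_1\rangle$ (hence also $|\psi_2\rangle$, by the ratio condition) lies outside the null cone; combined with stability, both orbits $G|\psi_1\rangle$ and $G|\psi_2\rangle$ are closed and non-zero. The ratio condition says that the point $([|\psi_1\rangle])$ and $([|\psi_2\rangle])$ have the same image under every map of the form $x\mapsto f_k(x)/h_k(x)$ built from homogeneous SLIPs of equal degree.

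The key step is to invoke the standard fact from geometric invariant theory that the ring of $G$-invariants separates closed $G$-orbits: if $G|\psi_1\rangle$ and $G|\psi_2\rangle$ are distinct closed orbits, there is an invariant polynomial (which we may take homogeneous, say of degree $k$, and by the divisibility Proposition we know such $k$ exist) taking different values on them. To turn the \emph{ratio} equality into an \emph{orbit} equality, I would argue as follows. Pick $h_k$ with $h_k(|\psi_1\rangle)\neq 0$. Replacing $|\psi_2\rangle$ by a scalar multiple $\lambda|\psi_2\rangle$ rescales $f_k\mapsto\lambda^k f_k$, so we may choose $\lambda$ (a positive real times a phase) so that $h_k(\lambda|\psi_2\rangle)=h_k(|\psi_1\rangle)$; the ratio condition then forces $f_k(\lambda|\psi_2\rangle)=f_k(|\psi_1\rangle)$ for \emph{all} $f_k\in I_{k,\mathbf{m}}$, and a short homogeneity argument across different degrees $k$ (using that if $f\in I_{k}$ then $f^{m}\in I_{km}$, and that products of invariants are invariant) upgrades this to: every SLIP of every degree agrees on $|\psi_1\rangle$ and $\lambda|\psi_2\rangle$. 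By the separation theorem, $G\lambda|\psi_2\rangle$ and $G|\psi_1\rangle$ are the same closed orbit, i.e. there is $g\in G$ with $|\psi_1\rangle = g(\lambda|\psi_2\rangle)$. Writing $\lambda=\rho e^{i\theta}$ with $\rho>0$ and absorbing $\rho\,g$ into a new $g'\in G$ (note $G$ is not all of $GL$, so one must rescale $g$ back to determinant one while compensating in $\rho$; this works because any positive scalar multiple of an element of $G$ differs from an element of $G$ by a scalar, and a global scalar is harmless after we renormalize), taking norms of $|\psi_1\rangle=e^{i\theta} g'|\psi_2\rangle$ and dividing yields exactly~\eqref{slo}.

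For the final sentence of the proposition, when $|\psi_1\rangle,|\psi_2\rangle$ are merely semi-stable (not in the null cone, not stable), I would use Proposition~\ref{lem2}: write $|\psi_j\rangle=|\eta_j\rangle+|\zeta_j\rangle$ with $|\eta_j\rangle$ stable and $|\zeta_j\rangle$ in the null cone, killed by a one-parameter subgroup $\varphi_j(z)\subset G_{\eta_j}$ as $z\to 0$. The point is that $|\eta_j\rangle=\lim_{z\to 0}\varphi_j(z)|\psi_j\rangle$ lies in the closure of $G|\psi_j\rangle$, so any SLIP takes the same value on $|\psi_j\rangle$ and $|\eta_j\rangle$ (an invariant polynomial is constant on orbit closures, being continuous and $G$-invariant). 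Hence the ratio condition for $|\psi_1\rangle,|\psi_2\rangle$ is equivalent to the ratio condition for the stable states $|\eta_1\rangle,|\eta_2\rangle$, and we apply the already-proved stable case to $|\eta_1\rangle,|\eta_2\rangle$ to get~\eqref{slo} with the $\eta$'s.

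The main obstacle I anticipate is the bookkeeping around scalars and the phase: $G$ is a product of \emph{special} linear groups, so it does not contain arbitrary global scalars, and one must be careful that the ``rescale $|\psi_2\rangle$ so one invariant matches'' trick is compatible with eventually producing a genuine $g\in G$ plus a pure phase plus the norm factor, rather than a spurious positive dilation. The cleanest way around this is to note that $SL(m_1)\otimes\cdots\otimes SL(m_n)$ together with global scalars generates $GL$-type scalings on $\mathcal{H}_n$ in a controlled way: any $c\in\mathbb{C}^\times$ with $c=\zeta\cdot t$ where $\zeta^{m_j}=1$... — more simply, one works projectively throughout (the ratio condition is manifestly projective) and only at the very end normalizes both states to unit norm, at which stage the leftover ambiguity is exactly a phase $e^{i\theta}$, giving~\eqref{slo}. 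The separation-of-closed-orbits theorem itself I would cite from~\cite{GIT} rather than reprove.
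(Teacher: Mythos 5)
Your overall architecture is the same as the paper's (rescale $|\psi_2\rangle$ so that one invariant matches, upgrade to agreement of invariants, invoke separation of closed orbits by the invariant ring, absorb the leftover scalar into the phase and normalization; and for the semi-stable case, pass to $|\eta_j\rangle$ via constancy of invariants on orbit closures). The semi-stable reduction in your last paragraph is correct and in fact more explicit than anything the paper writes down. But there is a genuine gap at the central step, namely the claim that ``a short homogeneity argument across different degrees'' upgrades agreement of all degree-$k$ invariants on $|\psi_1\rangle$ and $\lambda^{1/k}|\psi_2\rangle$ to agreement of \emph{all} invariants of \emph{all} degrees. For an invariant $f_l$ of degree $l$ not divisible by $k$, the only comparison the ratio hypothesis gives you is between $(f_l)^k$ and $(h_k)^l$, which are of equal degree $kl$; this yields $f_l(|\psi_1\rangle)^k=f_l(\lambda^{1/k}|\psi_2\rangle)^k$ and hence only $f_l(|\psi_1\rangle)=\zeta_l\,f_l(\lambda^{1/k}|\psi_2\rangle)$ for some $k$-th root of unity $\zeta_l$. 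Exact agreement follows only for degrees divisible by $k$ (by comparing $f_{mk}$ directly with $(h_k)^m$). So the premise you feed into the separation theorem --- that every invariant agrees on the two vectors --- is not established, and the orbits $G|\psi_1\rangle$ and $G\lambda^{1/k}|\psi_2\rangle$ need not coincide; a priori they could differ by the action of a $k$-th root of unity times the identity, which is generally \emph{not} in $G$.

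The paper closes exactly this gap in two moves. First it shows the root of unity $\zeta_l$ depends only on the degree $l$ and not on the particular $f_l$ (because the linear functionals $f\mapsto f(|\psi_1\rangle)$ and $f\mapsto f(\lambda^{1/k}|\psi_2\rangle)$ on $I_{l,\mathbf{m}}$ have the same kernel, hence are proportional). Second, it replaces $G$ by $\widetilde{G}=\mu_k G$ with $\mu_k=\{\zeta I\mid\zeta^k=1\}$: this is still reductive, a $G$-closed orbit is $\widetilde{G}$-closed, the $\widetilde{G}$-invariants are precisely the combinations of products of $G$-invariant generators with total degree divisible by $k$, and on those the two vectors genuinely agree. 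Separation of closed $\widetilde{G}$-orbits then gives $|\psi_1\rangle=\zeta\lambda^{1/k}g|\psi_2\rangle$ with $\zeta^k=1$ and $g\in G$, after which the scalar $\zeta\lambda^{1/k}$ is harmless and is fixed by normalization to be $e^{i\theta}/\|g|\psi_2\rangle\|$, giving~\eqref{slo}. Interestingly, the obstacle you flagged at the end (that $G$ contains no global scalars) is essentially the right worry, but it bites at the root-of-unity stage rather than at the final normalization; working ``projectively throughout'' does not dispose of it, because the separation theorem is a statement about orbits of the specific group $G$ in the vector space, not about projective classes.
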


\begin{remark}
If $n>1$ and all of the $m_{i}$ are equal then the set of states with closed
orbits (under $G$) contains an open dense set of $\mathcal{H}_{n}$. Every
orbit contains a unique closed orbit in its closure~\cite{Bo}. For a reductive
algebraic group two closed orbits are equal if and only if all of the values of the invariant ploynomials
agree on them (we will use this fact in the proof of the proposition c.f.~\cite{GIT}).
\end{remark}

\begin{proof}
Note that if there exist $k\in\mathbb{Z}_{>0}$ and $h_{k}\in I_{k,\mathbf{m}}
$ such that $h_{k}(|\psi_1\rangle)\neq0$ but $h_{k}(|\psi_2\rangle)=0$, then
clearly $|\psi_1\rangle$ and $|\psi_2\rangle$ are not SLOCC equivalent. We
therefore assume without loss of generality that there exists $h_{k}\in I_{k,%
\mathbf{m}}$ such that $h_{k}(|\psi_1\rangle)\neq0$ and $h_{k}(|\psi_2\rangle)%
\neq0$. Fixing $h_{k}$, and writing $\lambda=h_{k}(|\psi_1\rangle)/h_{k}(|\psi_2%
\rangle)\neq0$, our hypothesis implies that $f_{k}(|\psi_1\rangle)=\lambda
f_{k}(|\psi_2\rangle)$ for all $f_{k}\in I_{k,\mathbf{m}}$. Now, for an
arbitrary degree $l\in\mathbb{N}$, let $f_{l}\in I_{l,\mathbf{m}}$ be some
homogeneous SLIP of degree $l$. Then, $(f_{l})^{k}$ is an SLIP of the same
degree as $(h_{k})^{l}$. Hence, the hypothesis implies that

\begin{equation*}
f_{l}(|\psi_1\rangle^{k}=\lambda^{l}\left( f_{l}(|\psi_2\rangle)\right)
^{k}=f_{l}(\lambda^{1/k}|\psi_2\rangle)^{k}. 
\end{equation*}
So if $f_{l}\in I_{l,\mathbf{m}}$ then

\begin{equation*}
f_{l}(|\psi_1\rangle)=\zeta_{l}f_{l}\left( \lambda^{1/k}|\psi_2\rangle\right) , 
\end{equation*}
with $\zeta_{l}^{k}=1$. At first glance it looks like the $k^{\text{th}}$
root of unity $\zeta_{l}$ depends on the polynomial $f_{l}$. We will see
that it is very constrained. We first show that it depends only on the
degree $l$. Indeed, think of $\delta_{\psi_1}(f_{l})=f_{l}(|\psi_1\rangle)$ as a
linear functional on $I_{l,\mathbf{m}}$. Then $\ker\delta_{\psi_1}=\ker\delta
_{\lambda^{\frac{1}{k}}\psi_2}$ which implies that $\delta_{\psi_1}$ is
proportional to $\delta_{\lambda^{\frac{1}{k}}\psi_2}$. Thus, $\zeta_{l}$
depends only on the degree $l$. We now consider the group $\widetilde{G}%
=\mu_{k}G$ where $\mu_{k}=\{\zeta I|\zeta^{k}=1\}$. Note that $\widetilde{G}$
is also a reductive algebraic subgroup of $GL(\mathcal{H}_{n})$ so its
closed orbits are separated by its invariant polynomials. Furthermore every
vector that has a closed orbit under $G$ has a closed orbit under $%
\widetilde{G}$. Let $f_{1},...,f_{m}$ be a set of homogeneous generators for
the $G$-invariant polynomials on $\mathcal{H}_{n}$ with degrees $%
d_{1},...,d_{m}$. Then the $\widetilde{G}$ invariant polynomials on $%
\mathcal{H}_{n}$ are the linear combinations $%
\sum_{i_{1},...,i_{m}}a_{i_{1},...,i_{m}}f_{1}^{i_{1}}\cdots f_{m}^{i_{m}}$
where the sum \ is taken over indices with $i_{1}d_{1}+...+i_{m}d_{m}$
divisible by $k$. Repeating the same arguments for the group $\widetilde{G}$
(note that $h_{k}$ is $\widetilde{G}$--invariant) we conclude that if $p$ is
a $\widetilde{G}$ invariant polynomial on $\mathcal{H}_{n}$ then $%
p(|\psi_1\rangle)=p(\lambda^{\frac{1}{k}}|\psi_2\rangle)$. Thus, since the
orbits are closed, we find that $|\psi_1\rangle=\zeta \lambda^{\frac{1}{k}%
}g|\psi_2\rangle$ with $\zeta^{k}=1$ and $g\in G$. The upshot is $%
|\psi_1\rangle=cg|\psi_2\rangle$ with $c\in\mathbb{C},c\neq0,g\in G$. But $%
\left\Vert |\psi_1\rangle\right\Vert =1$ which implies that $|c|\left\Vert
g|\psi_2\rangle\right\Vert =1$. Thus $c=\frac{e^{i\theta}}{\left\Vert
g|\psi_2\rangle\right\Vert }$ as was to be proved.
\end{proof}

\section{(iii) A class of simple SLIPs for $n$ qubits}

In the proposition below we are using the same notations that were introduced above Eq.~(7) of the main text.

\begin{proposition}
 Let $\ell\in\mathbb{N}$ and consider a bipartite cut $\mathcal{A}_m\otimes\mathcal{B}_{n-m}$. Then, the function
$$
 f_{\ell}^{\mathcal{A}_m |\mathcal{B}_{n-m}}(\psi)\equiv{\rm Tr}\left[\left(UAVA^T\right)^{\ell}\right]
$$
 is an SL-invariant polynomial of degree $2\ell$.
\end{proposition}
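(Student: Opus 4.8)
The plan is to reduce everything to one elementary identity. Fix invertible matrices $P$ of size $2^{m}$ and $Q$ of size $2^{n-m}$; I claim that replacing the triple $(A,U,V)$ by $\big(PAQ^{T},\,(P^{-1})^{T}UP^{-1},\,(Q^{-1})^{T}VQ^{-1}\big)$ leaves ${\rm Tr}\big[(UAVA^{T})^{\ell}\big]$ unchanged. This is pure bookkeeping: inside the $\ell$-th power the factors $P^{-1}P$, $Q^{T}(Q^{-1})^{T}$, and $Q^{-1}Q$ telescope away, leaving $(P^{-1})^{T}(UAVA^{T})P^{T}$, and conjugation by $P^{T}$ drops out of ${\rm Tr}[(\cdot)^{\ell}]$ by cyclicity. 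The degree claim is also immediate: the entries of $A$ are linear in the amplitudes of $\psi$ (they differ from the computational-basis amplitudes by a fixed invertible linear change of coordinates), while $U$ and $V$ are constant, so $UAVA^{T}$ is homogeneous of degree $2$ and its $\ell$-th-power trace is homogeneous of degree $2\ell$.

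Next I would apply this identity twice. First, for well-definedness: passing from one pair of orthonormal bases $\{|u_{j}\rangle\},\{|v_{k}\rangle\}$ to another is implemented by invertible $P,Q$ under which $A\mapsto PAQ^{T}$, and --- because $U$ and $V$ are the Gram matrices of the bilinear forms $(\cdot,\cdot)_{m}$ and $(\cdot,\cdot)_{n-m}$ on those basis vectors --- also $U\mapsto (P^{-1})^{T}UP^{-1}$ and $V\mapsto(Q^{-1})^{T}VQ^{-1}$; this is exactly the substitution above, so $f_{\ell}^{\mathcal{A}_{m} |\mathcal{B}_{n-m}}$ does not depend on the chosen bases. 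Second, for $G$-invariance: write $g=g_{A}\otimes g_{B}$ with $g_{A}$ acting on the $m$-qubit block $\mathcal{A}_{m}$ and $g_{B}$ on $\mathcal{B}_{n-m}$, and keep the bases fixed. Then the amplitude matrix of $g|\psi\rangle$ is $G_{A}AG_{B}^{T}$, where $G_{A},G_{B}$ are the matrices of $g_{A},g_{B}$ in those bases, whereas $U$ and $V$ are literally unchanged. So to deduce $f_{\ell}^{\mathcal{A}_{m} |\mathcal{B}_{n-m}}(g\psi)=f_{\ell}^{\mathcal{A}_{m} |\mathcal{B}_{n-m}}(\psi)$ from the identity with $P=G_{A}$, $Q=G_{B}$, it suffices that the unchanged $U$ equals $(G_{A}^{-1})^{T}UG_{A}^{-1}$ and the unchanged $V$ equals $(G_{B}^{-1})^{T}VG_{B}^{-1}$ --- equivalently $G_{A}^{T}UG_{A}=U$ and $G_{B}^{T}VG_{B}=V$.

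That last point is where the qubit structure enters, through $A^{T}JA=J$ for $A\in SL(2,\mathbb{C})$. The form $(\cdot,\cdot)_{m}$ on $\mathcal{A}_{m}$ is $\langle\,\cdot^{*}|J^{\otimes m}|\cdot\,\rangle$, a genuine bilinear (not sesquilinear) form --- the complex conjugation only turning the Hermitian pairing into a bilinear one --- and $g_{A}$ is a tensor product of $SL(2,\mathbb{C})$ matrices; applying $A^{T}JA=J$ in each of the $m$ tensor slots gives $g_{A}^{T}J^{\otimes m}g_{A}=J^{\otimes m}$, that is $(g_{A}x,g_{A}y)_{m}=(x,y)_{m}$ for all $x,y$ (the relation $(g\psi,g\phi)_{n}=(\psi,\phi)_{n}$ of the main text, restricted to the block). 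Equating the Gram matrices of $\{g_{A}|u_{j}\rangle\}$ and $\{|u_{j}\rangle\}$ with respect to this form is precisely $G_{A}^{T}UG_{A}=U$, and likewise $G_{B}^{T}VG_{B}=V$ on $\mathcal{B}_{n-m}$; together with the first paragraph this closes the argument.

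I do not expect a genuine obstacle here: once the invariance of the bilinear form is in hand, everything is routine linear algebra. The only place that demands care is keeping the transpositions straight --- in particular, tracking how $U$ and $V$ transform, or fail to, under each of the two substitutions --- and remembering that $(\cdot,\cdot)_{m}$ is bilinear, so the relevant invariance is $g_{A}^{T}J^{\otimes m}g_{A}=J^{\otimes m}$ rather than any unitarity; this is also why the construction is special to $d=2$, the one dimension admitting such a $J$. A coordinate-free alternative, which makes basis independence manifest, uses the forms to identify $\mathcal{A}_{m}\cong\mathcal{A}_{m}^{*}$ and $\mathcal{B}_{n-m}\cong\mathcal{B}_{n-m}^{*}$, so that $\psi$ induces maps $\Psi_{A}\colon\mathcal{A}_{m}\to\mathcal{B}_{n-m}$ and $\Psi_{B}\colon\mathcal{B}_{n-m}\to\mathcal{A}_{m}$ with $f_{\ell}^{\mathcal{A}_{m} |\mathcal{B}_{n-m}}(\psi)={\rm Tr}[(\Psi_{B}\Psi_{A})^{\ell}]$; invariance of the forms gives $\Psi_{A}^{g\psi}=g_{B}\Psi_{A}^{\psi}g_{A}^{-1}$ and $\Psi_{B}^{g\psi}=g_{A}\Psi_{B}^{\psi}g_{B}^{-1}$, whence $\Psi_{B}^{g\psi}\Psi_{A}^{g\psi}=g_{A}(\Psi_{B}^{\psi}\Psi_{A}^{\psi})g_{A}^{-1}$ has the same power-traces.
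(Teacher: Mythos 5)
Your argument is correct and follows essentially the same route as the paper's proof: decompose $g=g_a\otimes g_b$, observe that the coefficient matrix transforms as $A\mapsto G_aAG_b^{T}$, use cyclicity of the trace, and reduce everything to the key identities $G_a^{T}UG_a=U$ and $G_b^{T}VG_b=V$, which follow from the $SL(2,\mathbb{C})^{\otimes m}$-invariance of the bilinear form $(\cdot,\cdot)_m$. Your packaging of the computation as a single telescoping identity (also yielding basis-independence) and the closing coordinate-free reformulation are nice but do not change the substance.
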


\begin{proof}
Let $g\in G$. We need to show that $f(g\psi)=f(\psi)$, where in this proof we removed from $f$ the subscript and superscript for the simplicity of the exposition. We can decompose $g=g_a\otimes g_b$ such
that $g_a\in SL(2,\mathbb{C})^{m}$ is acting on the space $\mathcal{A}_m$ and $g_b\in SL(2,\mathbb{C})^{n-m}$
is acting on $\mathcal{B}_{n-m}$. Further, denote by $G_{a}$ and $G_{b}$ the $m\times m$ and $(n-m)\times(n-m)$ 
square matrices whose elements are given by $\langle u_{j'}|g_a|u_j\rangle$ and $\langle v_{k'}|g_b|v_k\rangle$, respectively.
With this notations, replacing $|\psi\rangle$ with $g|\psi\rangle$, is equivalent to replacing $A$ with $G_{a}AG_{b}^{T}$.
We therefore have
\begin{align*}
f(g\psi)&={\rm Tr}\left[\left(UG_aAG_{b}^{T}VG_{b}A^TG_{a}^{T}\right)^{\ell}\right]\\
&={\rm Tr}\left[\left(G_{a}^{T}UG_aAG_{b}^{T}VG_{b}A^T\right)^{\ell}\right]
\end{align*}
where we have used the invariance of the trace under cyclic permutation. To complete the proof, we show now that
$$
G_{a}^{T}UG_a=U\;\;\text{and}\;\;G_{b}^{T}VG_b=V\;.
$$
Indeed,
$$
U_{jj'}=(u_j,u_{j'})_m=(g_au_j,g_au_{j'})_m=\left(G_{a}^{T}UG_a\right)_{jj'}\;\;,
$$
where we have used the invariance of the bilinear form $(\cdot,\cdot)_m$ under the action of $g_a$.
In the same way, one can prove that $G_{b}^{T}VG_b=V$. This completes the proof.
\end{proof}

As discussed in the main text, the above simple SLIPs are not linearly independent.
For odd number of qubits, there are no homogenous SLIPs with degree 2, and the lowest degree in this case is 4 (e.g.~\cite{W}).
In this case the number of homogenous SLIPs of degree 4 is $(2^{n-1}-1)/3$ (see below), whereas 
the number of our polynomials $f_{\ell=2}^{\mathcal{A}_q |\mathcal{B}_{n-q}}$ of degree 4 is equal to 
the number of inequivalent bipartite cuts $\mathcal{A}_q\otimes\mathcal{B}_{n-q}$ which is given by
$$
\sum_{q=1}^{\frac{n-1}{2}}{n\choose q}\;=2^{n-1}-1\;,
$$
where we assumed that $n$ is odd. Note that for any bipartite cut $\mathcal{A}_q\otimes\mathcal{B}_{n-q}$
there exists a bipartite cut of the form $\mathcal{A}_{n-q}\otimes\mathcal{B}_{q}$ which leads to the exact same polynomials. 
Thus, $q$ in the sum does not exceed $(n-1)/2$. Hence, our $2^{n-1}-1$ polynomials of degree 4, $f_{\ell=2}^{\mathcal{A}_q |\mathcal{B}_{n-q}}$, must be linearly dependent.
Nonetheless, the over-complete set of $2^{n-1}-1$ polynomials  $\{f_{\ell=2}^{\mathcal{A}_q |\mathcal{B}_{n-q}}\}$ span the set of all SLIPs of degree 4. 

\section{(iv) Schur-Weyl Duality}

The purpose of this short subsection is to give a very brief introduction to the Schur-Weyl duality, 
in the context of the work presented in this paper. For a more detailed analysis we refer the reader to 
a more standard book on representation theory of finite groups (c.f.~\cite{GW}). 

We consider the group $SL(m,\mathbb{C})$ acting on $\bigotimes^{k}\mathbb{C}^{m}$ by
the usual tensor action. The representations of $SL(m,\mathbb{C})$ appearing are determined
by their highest weight which is an $m$-tuple $\lambda=(\lambda_{1}%
,...,\lambda_{m})\in\mathbb{Z}^{m}$ with $\lambda_{1}\geq...\geq\lambda_{m}$.
We denote by $F^{\lambda}$ a fixed representation with highest weight
$\lambda$. Then $F^{\lambda}$ is equovalent with $F^{\mu}$ if and only if
$\lambda-\mu=(\delta,...,\delta)$ with $\delta\in\mathbb{Z}$. Also
$F^{\lambda}$ is a constituent of $\bigotimes^{k}\mathbb{C}^{m}$ if an only if
it is equivalent to $F^{\mu}$ with $\mu_{1}+...+\mu_{m}=k$ and $\mu_{m}\geq0$.
We will say that $\lambda=(\lambda_{1},...,\lambda_{r})\in\mathbb{Z}^{r}$ with
$\lambda_{1}\geq...\geq\lambda_{r}$ is dominant. Also if $\lambda$ is dominant
then we will call it a partition of $k$ if $\lambda_{1}+...+\lambda_{r}=k$ and
$\lambda_{r}\geq0$. Obviously, we can make the parametrization unambiguous by
using $\lambda$ with $\lambda_{m}=0$. But we note that if $\lambda,\mu
\in\mathbb{Z}^{m}$ are partitions of $k$ and if $F^{\lambda}$ and $F^{\mu}$
are equivalent then $\lambda=\mu$. In the literature, $\mu$ denotes a partition
of $k$ and if $l$ denotes the last index for which $\mu_{l}>0$ then the indices
$l+1,...,m$ are dropped. These partitions parametrize the equivalence classes
represenations of the symmetric group $S_{k}$. We fix a fixed realization of
the class that is parametrized by $\lambda$ and denote it by $V^{\lambda}$. If
we need to keep track of $k$ and $n$ we will write $V_{k}^{\lambda}$ and
$F_{n}^{\lambda}$.

We note that the group $S_{k}$ acts on $\bigotimes^{k}\mathbb{C}^{m}$ by
permuting the indices. This action commutes with that of $SL(m,\mathbb{C})$. Schur-Weyl
duality says that as a representation of $S_{k}\times SL(m,\mathbb{C})$ the space
$\bigotimes^{k}\mathbb{C}^{m}$ decomposes into
\[
\bigoplus_{\lambda}V_{k}^{\lambda}\otimes F_{m}^{\lambda}
\]
with the sum over all partitions $\lambda=(\lambda_{1},...,\lambda_{m})$ of
$k$. We note that if $m\geq k$ that this is a sum over all partitions of $k$
(after we have dropped all of the entries that are $0$).

This decomposition allows us to describe the projection of $\bigotimes
^{k}\mathbb{C}^{m}$ into $V_{k}^{\lambda}\otimes F_{m}^{\lambda}$ using either
a projection gotten by integrating the character of $F^{\lambda}$ restricted
to $SU(m)$ or summing the character of $V^{\lambda}$ over $S_{k}$. In other
words ($d\mu$ is the unique invariant probability measure on $SU(m)$)
\begin{align*}
P^{\lambda}&=\dim F^{\lambda}\int_{SU(n)}\overline{\chi_{F^{\lambda}}(u)}%
\otimes^{k}ud\mu(u)\\
&=\frac{\dim V^{\lambda}}{k!}\sum_{\sigma\in S_{k}}\chi_{V^{\lambda}}%
(\sigma)\sigma.
\end{align*}
An example of this is if $k=mr$ with $r$ a positive integer and $\lambda
=r(1,...,1)$ ($m$ ones) then $P^{\lambda}$ is the projectiion onto the 
$SL(m,\mathbb{C})$-invariants in $\bigotimes^{k}\mathbb{C}^{m}$.

\section{(v) Examples: using Eqs.(9,10) of the main text to construct SLIPs}

Formula (10) (of the main text) for $k=2$ necessitates $r=1$ and $m=2$. Thus, in this case $P_{m,k}$ is acting on 
$\mathbb{C}^{2}\otimes\mathbb{C}^{2}$ and is given by its action on basis elements as:
\[
v_{1}\otimes v_{2}\longmapsto\frac{1}{2}(v_{1}\otimes v_{2}-v_{2}\otimes
v_{1}).
\]
This implies that Formula (9) says for the case of $2s$ qubits
\[
(x_{1}\otimes x_{2}\otimes\cdots\otimes x_{2s})\otimes(y_{1}\otimes
y_{2}\otimes\cdots\otimes y_{2s})\longmapsto(x_{1}\otimes y_{1})\otimes
(x_{2}\otimes y_{2})\otimes\cdots\otimes(x_{2s}\otimes y_{2s})\longmapsto
\]%
\[
\frac{1}{2^{2s}}(x_{1}\otimes y_{1}-y_{1}\otimes x_{1})\otimes\cdots
\otimes(x_{2s}\otimes y_{2s}-y_{2s}\otimes x_{2s})\longmapsto\frac{1}{2^{2s}%
}\sum_{S\subset\{1,2....,2s\}}(-1)^{|S|}u_{S}\otimes u_{S^{c}}.
\]
Here the first arrow is the transpose as is the last. Also $S^{c}%
=\{1,2,...,2s\}-S$, $|S|$ is the size of $S$ and $u_{S}=z_{1}\otimes
z_{2}\otimes\cdots\otimes z_{2s}$ with $z_{j}=y_{j}$ if $j\in S$ and
$z_{j}=x_{j}$ if $j\notin S$. This yields the well known formula for the
degree $2$ invariant in any even number of qubits.

Formula (10) for $k=4$ plays a role only for $r=1$ and $m=4$ or $r=2$ and
$m=2$. In the first case Formula (10) implies that $P_{m,k}$ is acting on 
$\mathbb{C}^{4}\otimes\mathbb{C}^{4}\otimes\mathbb{C}^{4}\otimes\mathbb{C}^{4}$ and is given by its action on basis elements as:
\[
v_{1}\otimes v_{2}\otimes v_{3}\otimes v_{4}\longmapsto\frac{1}{24}%
\sum_{\sigma\in S_{4}}sgn(\sigma)v_{\sigma1}\otimes v_{\sigma2}\otimes
v_{\sigma3}\otimes v_{\sigma4}.
\]
We leave it to the reader to write out Formula (9) to get the degree $4$ invariant
for an arbitrary number of $4$-dimensional qudits. 

In the case of $r=2$ and $m=2$ then the
pertinent representation, $\lambda$, of $S_{4}$ corresponds to the partition
$2,2.$ The dimension of $\lambda$ is $2$. The conjugacy classes in $S_{4}$ are
given by $C_{4}$ the cycles of length $4$ (e.g. $(1234)$), $C_{3,1}$ the
cycles of length $3$, $C_{2,2}$ the products of disjoint transpositions (e.g.
$(12)(34)$), $C_{2,1,1}$ the transpositions,$C_{1,1,1,1}$ the identity
element. We have
\[
\chi_{\lambda}(\sigma)=\left\{
\begin{array}
[c]{c}%
0\text{ if }\sigma\in C_{4}\\
-1\text{ if }\sigma\in C_{3,1}\\
2\text{ if }\sigma\in C_{2,2}\\
0\text{ if }\sigma\in C_{2,1,1}\\
2\text{ if }\sigma=I
\end{array}
\right.  .
\]
We note that $|C_{3,1}|=8$,$|C_{2,2}|=3$.   Formula (10) implies that $P_{m,k}$ is acting on 
$\mathbb{C}^{2}\otimes\mathbb{C}^{2}\otimes\mathbb{C}^{2}\otimes\mathbb{C}^{2}$ and is given by its action on basis elements as:
\[
v_{1}\otimes v_{2}\otimes v_{3}\otimes v_{4}\longmapsto\frac{1}{6}v_{1}\otimes
v_{2}\otimes v_{3}\otimes v_{4}-\frac{1}{12}\sum_{\sigma\in C_{3,1}}%
v_{\sigma1}\otimes v_{\sigma2}\otimes v_{\sigma3}\otimes v_{\sigma4}+\frac
{1}{6}\sum_{\sigma\in C_{2,2}}v_{\sigma1}\otimes v_{\sigma2}\otimes
v_{\sigma3}\otimes v_{\sigma4}.
\]
We leave it to the reader to write out Formula (9) for $n$ qubits.

\section{(vi) The dimension formula of the space of SLIPs of fixed degree}

The reference that we will use for this section is Chapter 9 in~\cite{GW}. In
this section we will consider the case of $n$--qudits all of the same
dimension, say, $m.$ Thus we are looking at $\mathcal{H}_{n}=\otimes
^{n}\mathbb{C}^{m}$ and $G=SL(m,\mathbb{C})\times\cdots\times SL(m,\mathbb{C}%
)$, $n$ copies. we analyse the space $S^{k}(\mathcal{H}_{n})^{G}$ as the
$S_{k} $--invariants in $\left(  \otimes^{k}\left(  \otimes^{n}\mathbb{C}%
^{m}\right)  \right)  ^{G}$. Using the map $\left\vert u\right\rangle
\longmapsto$ $\left\vert u\right\rangle ^{T}$ defined in (8) we are
considering%
\[
(\otimes^{n} (\otimes^{k}\mathbb{C}^{m})^{SL(m,\mathbb{C})}) ^{S_{k}%
}\overset{}{(\ast)}
\]
with $S_{k}$ acting by the $n$--fold tensor product of the action on
$\otimes^{k}\mathbb{C}^{m}$ which is the action of permuting the tensor
factors. $(\ast)$ is $0$ unless $k=rm$ for some $r\in\mathbb{Z}_{\geq0}$. As
we observed in subsection (iv) the $SL(m,\mathbb{C})$--invariants in
$\otimes^{k}\mathbb{C}^{m}$ define the representation $V^{\lambda}$ of $S_{k}$
with $\lambda=(r,r,...,r)\in\left(  \mathbb{Z}_{\geq0}\right)  ^{m}$. \ Thus
as far as $S_{k}$ is concerned
\[
\left(  \otimes^{k}\left(  \otimes^{n}\mathbb{C}^{m}\right)  \right)
^{G}=\otimes^{n}V^{\lambda}.
\]
Thus if $k=rm$ and $\lambda$ are as above then%
\[
\dim S^{k}(\mathcal{H}_{n})^{G}=\left(  \otimes^{n}V^{\lambda}\right)
^{S_{k}}.
\]
This says that if $\chi_{\lambda}$ is the character of $V^{\lambda}$ then%
\[
\dim S^{k}(\mathcal{H}_{n})^{G}=\frac{1}{k!}\sum_{\sigma\in S_{k}}%
\chi_{\lambda}(\sigma)^{n}\text{.}
\]
We recall that characters are constant on conjugacy classes (the sets of the
form $\{\mu\sigma\mu^{-1}|\mu\in S_{k}\}$). The conjugacy classes of $S_{k}$
can be described by partitions using the cycle decomposition of an
permutation. The (standard) algorithm for this is start with a permutation
$\sigma$. Consider $\sigma^{j}1$. Then there must be a first $k_{1}$ such that
$\sigma^{k_{1}-1}1\neq1$ but $\sigma^{k_{1}}1=1$. Set $m_{1,j}=\sigma^{j}1$
for $j=0,...,k_{1}-1$ then on the set $K_{1}=\{m_{1,0},...,m_{1,k_{1}-1}\}$
the permutation $\sigma$ acts by the $k_{1}$--cycle $(m_{1,0}m_{1,1}\dots
m_{1k_{1}-1})$. If $k_{1}=k$ then we are done. Otherwise choose the smallest
element of $\{1,...,k\}$ that is not in $K_{1}$ and denote it $m_{2,0}$ then
$\sigma^{j}m_{2,0}=m_{2,0}$ for the first time for $j=k_{2}$. Set
$m_{2,j}=\sigma^{j}m_{2,0}$. We now have a second cycle whose entries are
disjoint from $K_{1}$,$(m_{2,0}m_{2,1}\dots m_{2k_{2}-1})$ action on the set
$K_{2}=\{m_{2,j}|j=0,...,k_{2}-1\}$. If $k_{1}+k_{2}=k$ then we are done
otherwise do the same procedure on $\{1,...,n\}-\left(  K_{1}\cup
K_{2}\right)  $, etc. Finally put the $k_{1},...,k_{l}$ in decreasing order.
This yields a partition of $k$ corresponding to each $\sigma\in S_{k}$. It is
not hard to see that $\sigma$ and $\mu$ are conjugate if and only if they have
the same cycle decomposition.

If $\mu$ is a partition of $k$ then let $c_{\mu}$ denote the corresponding
conjugacy class and let $Ch_{\mu}$ be the characterisitic function of the
subset $c_{\mu}$ of $S_{k}$. That is,
\[
Ch_{\mu}(\tau)=\left\{
\begin{array}
[c]{c}%
1\text{ if }\tau\in c_{\mu}\\
0\text{ otherwise}%
\end{array}
\right.  .
\]
\ Then the functions $Ch_{\mu}$ form a basis of the space of central functions
on $S_{k}$. Thus if $\lambda$ is a partitian of $k$ then%
\[
\chi_{\lambda}=\sum_{\mu}a_{\lambda\mu}Ch_{\mu}%
\]
where the sum is over the partitions of $k$. The numbers $\alpha_{\lambda,\mu
}$ arranged into a table give the character table of $S_{k}$. However the
conjugacy class partitions are usually given in the way we will describe them
next. Let $C_{\mu}$ be a conjugacy clasee then we reorganize the partition in
\emph{increasing} order and lay it out as $1^{m_{1}}2^{m_{2}}\cdots k^{m_{k}}$
where $m_{1}$ of the cycles are of length $1$ (fixed points), $m_{2}$ are of
length $2$,..., $m_{s}$ are of length $s$. Thus the permutation
\[
1\rightarrow2,2\rightarrow4,3\rightarrow1,4\rightarrow3,5\rightarrow
6,6\rightarrow5,7\rightarrow8,8\rightarrow7
\]
has $\mu=(4,2,2)$ which also corresponds to $1^{0}2^{2}4^{1}$ we will use the
two notations for partitions interchangably. The following is standard and
easily proved

\begin{lemma}
If $\mu$ is given as $1^{m_{1}}2^{m_{2}}\cdots k^{m_{k}}$ then
\[
\left\vert c_{\mu}\right\vert =\frac{k!}{\prod_{j=1}^{k}\left(  m_{j}%
!j^{m_{j}}\right)  }.
\]

\end{lemma}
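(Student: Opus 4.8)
The plan is to count the permutations of cycle type $\mu$ directly and deduce the formula from an overcounting argument; equivalently one may use the orbit--stabilizer theorem together with an identification of the centralizer, and I will indicate both routes.

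First I would recall (as noted in the text immediately preceding the lemma) that $c_\mu$ is exactly the set of all $\sigma\in S_k$ whose cycle decomposition has $m_j$ cycles of length $j$ for each $j$. To enumerate this set I would set up a ``template'' consisting of $m_1$ parenthesized slots of length $1$, then $m_2$ of length $2$, \dots, then $m_k$ of length $k$, for a total of $m_1\cdot 1+m_2\cdot 2+\cdots+m_k\cdot k=k$ slots. Filling the slots with a bijection from $\{1,\dots,k\}$ onto the slots --- of which there are $k!$ --- and reading each parenthesized block as a cycle produces a permutation in $c_\mu$, and conversely every element of $c_\mu$ arises from some filling. Hence $|c_\mu|=k!/N$, where $N$ is the number of fillings that yield a fixed $\sigma\in c_\mu$.

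The heart of the argument is the claim that $N=\prod_{j=1}^{k} m_j!\, j^{m_j}$. The transformations of a filling that leave the resulting permutation unchanged are generated by (i) cyclically rotating the entries inside any single block, and (ii) permuting among themselves the $m_j$ blocks that have the same length $j$. A rotation of one $j$-block contributes a cyclic group of order $j$, and there are $m_j$ such blocks, giving $j^{m_j}$; reordering equal-length blocks gives $m_j!$; so this group has order $\prod_j j^{m_j} m_j!$, and it is immediate that it acts freely on the set of fillings. The substantive point to verify is that these are the \emph{only} coincidences, i.e. that if two fillings yield the same $\sigma$ then one is carried to the other by a combination of (i) and (ii); this is precisely the statement that the cycle decomposition of a permutation is unique up to the order in which the cycles are listed and the choice of starting element within each cycle, which is exactly what the cycle-decomposition algorithm described just above the lemma establishes. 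Granting this, $N=\prod_j m_j!\,j^{m_j}$ and the formula follows.

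An alternative phrasing avoids the word ``free action'': fix $\sigma\in c_\mu$; by orbit--stabilizer, $|c_\mu|=|S_k|/|Z_{S_k}(\sigma)|$, and the same bookkeeping identifies $Z_{S_k}(\sigma)$ with the direct product over $j$ of the wreath products $(\mathbb{Z}/j\mathbb{Z})\wr S_{m_j}$, of order $\prod_j j^{m_j} m_j!$, giving the claim with $|S_k|=k!$. In either approach the only non-routine ingredient is the justification of the overcounting factor from uniqueness of cycle decomposition, and since that uniqueness has just been spelled out in the text, the proof is short.
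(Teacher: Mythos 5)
Your proof is correct: the template-filling count with the overcounting factor $\prod_{j} m_{j}!\,j^{m_{j}}$ (equivalently, orbit--stabilizer with the centralizer identified as $\prod_{j}(\mathbb{Z}/j\mathbb{Z})\wr S_{m_{j}}$) is the standard argument, and you correctly isolate the one point needing justification, namely that uniqueness of the cycle decomposition accounts for all coincidences among fillings. The paper itself gives no proof, dismissing the lemma as ``standard and easily proved,'' so your write-up simply supplies the expected argument.
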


Here is the result that we have been aiming at

\begin{proposition}
If $k\neq mr$ then $\dim S^{k}(\mathcal{H}_{n})^{G}=0$ otherwise set
$\lambda=(r,...,r)$ with $m$ entries. Then%
\[
\dim S^{k}(\mathcal{H}_{n})^{G}=\frac{1}{k!}\sum_{\mu}(a_{\lambda\mu}%
)^{n}|c_{\mu}|
=\sum_{\mu}\frac{(a_{\lambda\mu})^{n}}{\prod_{j=1}^{k}\left(  m_{j}!j^{m_{j}%
}\right)  }.
\]

\end{proposition}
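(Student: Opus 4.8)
The plan is to start from the character-sum formula for $\dim S^{k}(\mathcal{H}_{n})^{G}$ already derived above and simply regroup the sum over $S_{k}$ according to conjugacy classes. First I would dispose of the vanishing case: Schur--Weyl duality, as recorded in subsection (iv), tells us that $\left(\otimes^{k}\mathbb{C}^{m}\right)^{SL(m,\mathbb{C})}$ is the $S_{k}$-module $V^{\lambda}$ with $\lambda=(r,\dots,r)$ ($m$ entries) when $k=mr$, and is $0$ otherwise; since the space $(\ast)$ is an $n$-fold tensor product of this factor (intertwined over to $S^{k}(\mathcal{H}_{n})^{G}$), it is $0$ whenever $k\neq mr$. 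So from now on assume $k=mr$ and $\lambda=(r,\dots,r)$, which is automatically a partition of $k$ with $\lambda_{m}=r\geq 0$, hence a legitimate label of an irreducible $S_{k}$-representation whose character table entries $a_{\lambda\mu}$ make sense (note nothing forces $m\geq k$).

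Next I would invoke the identity established just above,
\[
\dim S^{k}(\mathcal{H}_{n})^{G}=\frac{1}{k!}\sum_{\sigma\in S_{k}}\chi_{\lambda}(\sigma)^{n},
\]
coming from $\left(\otimes^{n}V^{\lambda}\right)^{S_{k}}$ and the standard formula for the dimension of the invariants of a tensor power in terms of the character. Since $\chi_{\lambda}$ is a class function, it is constant on each conjugacy class $c_{\mu}$, and by the very definition of the expansion $\chi_{\lambda}=\sum_{\mu}a_{\lambda\mu}Ch_{\mu}$ in the basis of indicator functions of the classes, that constant value equals $a_{\lambda\mu}$. Therefore $\chi_{\lambda}(\sigma)^{n}=a_{\lambda\mu}^{\,n}$ for every $\sigma\in c_{\mu}$, and breaking the sum over $S_{k}$ into a sum over classes and then over the members of each class gives
\[
\sum_{\sigma\in S_{k}}\chi_{\lambda}(\sigma)^{n}=\sum_{\mu}|c_{\mu}|\,a_{\lambda\mu}^{\,n},
\]
which, divided by $k!$, is the first of the two claimed expressions.

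For the second expression I would substitute the cycle-counting Lemma, $|c_{\mu}|=k!/\prod_{j=1}^{k}\left(m_{j}!\,j^{m_{j}}\right)$, where $\mu$ is written in the form $1^{m_{1}}2^{m_{2}}\cdots k^{m_{k}}$. The factor $k!$ from the numerator cancels the $1/k!$ in front, leaving
\[
\dim S^{k}(\mathcal{H}_{n})^{G}=\sum_{\mu}\frac{a_{\lambda\mu}^{\,n}}{\prod_{j=1}^{k}\left(m_{j}!\,j^{m_{j}}\right)},
\]
exactly as stated. Since every step is a direct consequence of facts already proved (Schur--Weyl, the character formula for invariants of $\otimes^{n}V^{\lambda}$, and the Lemma), there is no real obstacle here; the only point deserving a word of care is the identification of $a_{\lambda\mu}$ with the value of $\chi_{\lambda}$ on $c_{\mu}$, which is immediate because the $Ch_{\mu}$ are disjointly-supported indicator functions, so expanding any central function in that basis just reads off its values on the conjugacy classes.
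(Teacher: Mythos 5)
Your argument is correct and follows the paper's own derivation essentially verbatim: reduce to $(\otimes^{n}V^{\lambda})^{S_{k}}$ via Schur--Weyl (giving the vanishing when $k\neq mr$), apply the character formula $\frac{1}{k!}\sum_{\sigma}\chi_{\lambda}(\sigma)^{n}$, regroup over conjugacy classes using $\chi_{\lambda}=\sum_{\mu}a_{\lambda\mu}Ch_{\mu}$, and substitute the Lemma for $|c_{\mu}|$. No substantive differences from the paper's treatment.
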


The asymptotic formula follows from this. We will give it in full generality
and at the end of this subsection we will explain how the formulas look for qubits.

Let $(\pi,V)$ be an irreducible representation of a finite group, $\Gamma$. If
$\gamma\in\Gamma$ then the Schwarz inequality implies that%
\[
\left\vert \text{tr}(\pi(\gamma))\right\vert \leq\dim V
\]
with equality if and only if $\pi(\gamma)=\zeta I$ for some norm-one scalar
$\zeta$. The subgroup of all elements of $\Gamma$ such that $\pi(\gamma)$ is a
multiple of the identity is a normal subgroup. If $k>4$ then the only
non-trivial normal subgroup of $S_{k}$ is the alternating group. This implies
that if $k>4$ and if $\lambda$ is a partition of $k$ such that the
representation $V^{\lambda}$ is not a one dimensional representation then%
\[
\left\vert \chi_{\lambda}(\sigma)\right\vert <\chi_{\lambda}(I)=\dim
V^{\lambda}%
\]
if $\sigma$ is not the identity in $S_{k}$. If $k=4$ and $\lambda=(2,2)$ then
there is a Klein $4$--group of elements that map to the identity.

We have

\begin{corollary}
If $k>4$ then we have the following asymptotic formula when $k=rm$ and
$\lambda=(r,...,r)$%
\[
\lim_{n\rightarrow\infty}\frac{\dim S^{k}(\mathcal{H}_{n})^{G}}{(\dim
V^{\lambda})^{n}}=\frac{1}{k!}.
\]
If $k=4$ and $\lambda=(2,2)$ (so $m=2)$ then then the formula is%
\[
\lim_{n\rightarrow\infty}\frac{\dim S^{4}(\mathcal{H}_{n})^{G}}{2^{n}}%
=\frac{1}{6}.
\]

\end{corollary}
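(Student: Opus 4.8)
The plan is to read the statement off the exact formula established just above, namely $\dim S^{k}(\mathcal{H}_{n})^{G}=\tfrac{1}{k!}\sum_{\sigma\in S_{k}}\chi_{\lambda}(\sigma)^{n}$ with $k=rm$ and $\lambda=(r,\dots,r)$ ($m$ entries). Dividing by $(\dim V^{\lambda})^{n}=\chi_{\lambda}(I)^{n}$ yields
\[
\frac{\dim S^{k}(\mathcal{H}_{n})^{G}}{(\dim V^{\lambda})^{n}}=\frac{1}{k!}\sum_{\sigma\in S_{k}}\left(\frac{\chi_{\lambda}(\sigma)}{\dim V^{\lambda}}\right)^{n},
\]
a \emph{finite} sum, so I may pass to the limit $n\to\infty$ termwise. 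The task then reduces to deciding, for each $\sigma$, whether the base $\chi_{\lambda}(\sigma)/\dim V^{\lambda}$ has modulus $<1$ (the term dies), equals $1$ (the term is constantly $1$), or equals another root of unity (the term oscillates and no limit exists).

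By the Schwarz inequality recalled just above, $|\chi_{\lambda}(\sigma)|\le\dim V^{\lambda}$ with equality iff the image of $\sigma$ under the representation $\pi_{\lambda}$ on $V^{\lambda}$ is a scalar matrix; and since every character of $S_{k}$ is a rational integer, in the equality case that scalar is $\pm1$. So everything reduces to describing the normal subgroup $N=\{\sigma\in S_{k}\mid \pi_{\lambda}(\sigma)\text{ is scalar}\}$ and checking that $\pi_{\lambda}$ is trivial on $N$. For $k>4$ the only normal subgroups of $S_{k}$ are $\{e\}$, $A_{k}$, $S_{k}$; since $\lambda=(r,\dots,r)$ has $m\ge2$ equal parts with $r\ge2$, the module $V^{\lambda}$ is neither the trivial $(k)$ nor the sign $(1^{k})$ representation, hence is not one-dimensional, and as $A_{k}$ is simple and non-abelian it has no non-trivial one-dimensional quotient, so $N$ cannot contain $A_{k}$; therefore $N=\{e\}$, only the $\sigma=I$ term survives, and the limit is $1/k!$. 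For $k=4$, $\lambda=(2,2)$ I instead read off the character table above: $|\chi_{(2,2)}(\sigma)|=2=\dim V^{(2,2)}$ holds exactly for $\sigma\in\{I\}\cup C_{2,2}$, and there $\chi_{(2,2)}=+2$, so $\pi_{(2,2)}$ is the identity precisely on the Klein four-group $N=\{I\}\cup C_{2,2}$ (consistently with $S_{4}/N\cong S_{3}$, along which $V^{(2,2)}$ is the pullback of the standard $2$-dimensional representation of $S_{3}$). Since $|N|=1+|C_{2,2}|=4$, the surviving terms contribute $4/4!=1/6$.

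The only real content — and the step to carry out with care — is the determination of $N$: that for $k>4$ no non-identity permutation acts by a scalar on $V^{\lambda}$, and that for $k=4$ this scalar locus is exactly the Klein four-group. This is where the simplicity of $A_{k}$ for $k\ge5$ is used, and it is exactly this input that produces the dichotomy between $k>4$ and $k=4$ noted in the main text. One also needs $r\ge2$ to guarantee $V^{\lambda}$ is not one-dimensional; in the degenerate case $k=m$ (where $\lambda=(1^{k})$ is the sign representation) the base equals $\mathrm{sgn}(\sigma)$ and the quotient oscillates between $0$ and $1$, so that case is genuinely excluded. With the description of $N$ in hand, the remainder is the elementary fact that a finite sum of $n$-th powers of complex numbers of modulus at most $1$ converges to the number of summands equal to $1$.
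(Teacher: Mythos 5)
Your argument is correct and follows essentially the same route as the paper: the exact character-sum formula, the Schwarz-inequality criterion for $|\chi_\lambda(\sigma)|=\dim V^{\lambda}$, normality of the scalar locus, and the simplicity of $A_k$ for $k\geq 5$ versus the Klein four-group for $k=4$, $\lambda=(2,2)$. Your added observation that the degenerate case $r=1$, $m=k$ (where $V^{\lambda}$ is the sign representation and the ratio oscillates) must be excluded is a genuine refinement; the paper handles it only implicitly via its standing hypothesis that $V^{\lambda}$ is not one-dimensional.
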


\begin{proof}
If $k>4$ then we note that if $\beta=(1,1,...,1)$ is the partition
correspnding to the identity in $S_{k}$ then $a_{\lambda,\beta}=\dim
V^{\lambda}>\left\vert a_{\lambda,\mu}\right\vert $ for any other partition,
$\beta$. \ Thus%
\[
\frac{\dim S^{k}(\mathcal{H}_{n})^{G}}{\left(  \dim V^{\lambda}\right)  ^{n}%
}=\frac{1}{k!}\sum_{\mu}(\frac{a_{\lambda\mu}}{a_{\lambda,\beta}})^{n}|c_{\mu
}|.
\]
Since $\lim_{n\rightarrow\infty}\frac{a_{\lambda\mu}}{a_{\lambda,\beta}%
}=\delta_{\beta,\mu}$ and $\left\vert c_{\beta}\right\vert =1$, this proves
the result if $k>4$. If $k=4$ and $\lambda=(2,2)$ then $\dim V^{\lambda}=2$
and there are $4$ elements that yield the value $\pm I$.
\end{proof}

We note that (c.f.~\cite{GW} Corollary 9.1.5) if $\lambda=rm$ then
\[
\dim V^{\lambda}=\frac{k!\prod_{1\leq i<j\leq m}(j-i)}{\prod_{j=1}%
^{m}(r+m-j)!}.
\]
Thus if $m=2$ and $k=2r$ then $\dim V^{\lambda}=\frac{1}{r+1}\binom{2r}{r}$
the $r$--th Catalan number. For $k=rm$ and $\lambda=(r,...,r)$ ($m$--factors)
then the dimension should be thought of as a mutinomial generalization of the
Catalan number.

\section{(vii) An alternative way to express all SLIPs in the case of $n$ qubits}

Our expression of $P_{m,k}$ in Eq.~(10) of the main text involves in general many terms and therefore is somewhat cumbersome.
Hence, in this section we show an alternative way to express \emph{all} SLIPs of degree $k$ in the 
important case of $n$-qubits. This new technique has the advantage that it is computationally efficient for small degrees.
We have already seen that for the qubit case we can express \emph{some} of the SLIPs
very elegantly with Eq.(10). However, that construction is only partial as it does not consists of all SLIPs. 
Hence, as an application to our new expression in proposition~\ref{sl2} (see below), we give at the end of this section an example of the unique homogeneous SLIP of degree 6 in 5 qubits; this SLIP can not be written in the form of Eq.(7) of the main text. 

\subsubsection{An SL(2,$\mathbb{C}$) technique}

In this section $\mathcal{H}_{n}$ denotes $\otimes^{n}%
\mathbb{C}
^{2}$ and $G_{n}=SL(2,%
\mathbb{C}
)\otimes\cdots\otimes SL(2,%
\mathbb{C}
)$ $n$--copies. We set%
\[
d_{n,k}=\dim P^{k}(\mathcal{H}_{n})^{G_{n}}\text{.}%
\]
We consider $\mathcal{H}_{n+1}=\left\vert 0\right\rangle \otimes
\mathcal{H}_{n}\oplus\left\vert 1\right\rangle \otimes\mathcal{H}_{n}$. Then
the character of the invariants for
\[
P^{k}(\mathcal{H}_{n})^{I\otimes G_{n}}%
\]
as an $SL(2,%
\mathbb{C}
)\otimes I_{\mathcal{H}_{n}}$ representation restricted to
\[
T=\left\{  \left[
\begin{array}
[c]{cc}%
q & 0\\
0 & q^{-1}%
\end{array}
\right]  |q\in\mathbb{C}-\{0\}\right\}
\]
is
\[
q^{k}d_{n,k}+q^{k-2}\dim(P^{k-1}(\mathcal{H}_{n})\otimes P^{1}(\mathcal{H}%
_{n}))^{G_{n}}+q^{k-4}\dim(P^{k-2}(\mathcal{H}_{n})\otimes P^{2}%
(\mathcal{H}_{n}))^{G_{n}}+...+q^{-k}d_{n,k}.
\]
Here the action of $G_{n}$ is the tensor product action on the factors. We
note that the coefficient of $q^{j}$ is the same as $q^{-j}$ in this
expression. We also note that $d_{n,k}=0$ if $k$ is odd. Thus we may assume that
$k$ is even. We observe that the character of the $\frac{k}{2}$--spin
representation of $SL(2,%
\mathbb{C}
)$ restricted to $T$ is
\[
q^{k}+q^{k-2}+...+q^{0}+...+q^{-k}.
\]
This implies that if $k=2r$ then%
\[
d_{n+1,k}=\dim\left(  \left(  P^{r}(\mathcal{H}_{n})\otimes P^{r}%
(\mathcal{H}_{n})\right)  ^{G_{n}}\right)  -\dim\left(  \left(  P^{r+1}%
(\mathcal{H}_{n})\otimes P^{r-1}(\mathcal{H}_{n})\right)  ^{G_{n}}\right)  .
\]
it also implies that if $f$ $\neq0$ is a homogeneous polynomial of degree $2r$
in $n+1$ qubits that is invariant under $G_{n+1}$ then if we think of $f$ as a
polynomial in two copies of $n$ qubits it must have a non-zero component in
$\left(  P^{r}(\mathcal{H}_{n})\otimes P^{r}(\mathcal{H}_{n})\right)  ^{G_{n}%
}$. For simplicity we will concentrate on the case when $n$ is even (so $n+1$
is odd). This hypothesis implies that $\mathcal{H}_{n}$ there exists a (unique
up up to scalar) $G_{n}$--invariant symmetric complex bilinear form $($ \ ,
\ \ $)$ on $\mathcal{H}_{n}$. Let $v_{i}$ with $i=0,1,...,N=2^{n}-1$ be a an
orthonormal basis with respect to this form. We think of the first copy,
$\left\vert 0\right\rangle \otimes\mathcal{H}_{n}$, as having elements $\sum
x_{j}v_{j}$ and the second, $\left\vert 1\right\rangle \otimes\mathcal{H}_{n}%
$, $\sum y_{j}v_{j}$. Then the action of the Lie algebra first
$SL(2,\mathbb{C})$--factor in $G_{n+1}$ (that is $g\otimes I_{\mathcal{H}_{n}%
}$) \ is given by%
\[
X=\sum_{j=0}^{N}x_{j}\frac{\partial}{\partial y_{j}},\;\;\;
Y=\sum_{j=0}^{N}%
y_{j}\frac{\partial}{\partial x_{j}},\;\;\;
H=\sum_{j=0}^{N}x_{j}\frac{\partial
}{\partial x_{j}}-\sum_{j=0}^{N}y_{j}\frac{\partial}{\partial y_{j}}.
\]
With commutation relations%
\[
\lbrack X,Y]=H,\;\;[H,X]=2X,\;\;[H,Y]=-2Y.
\]
We note that $X\longleftrightarrow\left[
\begin{array}
[c]{cc}%
0 & 1\\
0 & 0
\end{array}
\right]  ,Y\longleftrightarrow\left[
\begin{array}
[c]{cc}%
0 & 0\\
1 & 0
\end{array}
\right]  ,H\longleftrightarrow\left[
\begin{array}
[c]{cc}%
1 & 0\\
0 & -1
\end{array}
\right]  $ so the Casimir operator is given by
\[
C=XY+YX+\frac{1}{2}H^{2}.
\]
Since $XY=YX+H$ we have%
\[
C=2YX+\frac{1}{2}(H+1)^{2}-\frac{1}{2}.
\]
We replace $C$ with $L=YX+\frac{1}{4}(H+1)^{2}$ and note that this operator
commutes with the action of $G_{n+1}$.

We note that if we use the same formulas for $X$,$Y$ and $H$ on the
polynomials in $x_{0},x_{1},...,x_{N}.y_{0},...,y_{N}$ with $N$ an arbitrary
positive integer then we have the same commutation relations. Let
$P_{p,q,N+1}$ be the space of all polynomials that are homogeneous of degree
$p$ in $x_{0},...,x_{N}$ and degree $q$ in $y_{0},...,y.$ Let $X_{N+1}%
,Y_{N=1},H_{N+1}$ be the corresponding operators (as above). The following is obvious.

\begin{lemma}
Let $0\leq k<N$ and let $T_{k,N}:P_{p,q,N+1}\rightarrow P_{p,q,k+1}$ be given
by
\[
T_{k+1,N+1}(f)(x_{0},...,x_{k},y_{0},...,y_{k})=f(x_{0},...,x_{k}%
,0,...0,y_{0},...,y_{k}.0,...,0)
\]
with $N-k$ zeros. Then if $Z=X,Y$ or $H$ then%
\[
T_{k,N}(Z_{N+1}f)=Z_{k+1}T_{k,N}(f)\text{.}%
\]

\end{lemma}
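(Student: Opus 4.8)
The plan is to split each operator $Z_{N+1}$ (with $Z\in\{X,Y,H\}$) into the part indexed by $0\le j\le k$, which is literally $Z_{k+1}$ written in the variables $x_0,\dots,x_k,y_0,\dots,y_k$, plus a remainder indexed by $k<j\le N$, and then to show that $T_{k,N}$ annihilates the remainder and intertwines the first part with $Z_{k+1}$.

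First I would note that, for $Z=X$, $Y$, or $H$, the summand of $Z_{N+1}$ with index $j$ in the range $k<j\le N$ always carries an explicit factor of a variable that $T_{k,N}$ sets to $0$: in $X_{N+1}$ the $j$-th term is $x_j\,\partial/\partial y_j$, in $Y_{N+1}$ it is $y_j\,\partial/\partial x_j$, and in $H_{N+1}$ it is $x_j\,\partial/\partial x_j-y_j\,\partial/\partial y_j$. Hence applying any such summand to $f\in P_{p,q,N+1}$ yields a polynomial still divisible by $x_j$ (resp.\ $y_j$), which is therefore killed by the substitution $x_j\mapsto0$ (resp.\ $y_j\mapsto0$) that is part of $T_{k,N}$. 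Consequently $T_{k,N}(Z_{N+1}f)=T_{k,N}\bigl(Z'f\bigr)$, where $Z'=\sum_{j=0}^{k}(\cdots)$ is the truncated operator regarded as acting on $P_{p,q,N+1}$.

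Next I would write $f$ as a polynomial in the ``outer'' variables $x_{k+1},\dots,x_N,y_{k+1},\dots,y_N$ with coefficients that are polynomials in the ``inner'' variables $x_0,\dots,x_k,y_0,\dots,y_k$; by definition $T_{k,N}(f)$ is precisely the coefficient of the monomial $1$ in the outer variables, i.e.\ the outer-constant term of this expansion. The truncated operator $Z'$ only differentiates with respect to, and multiplies by, inner variables, so it acts coefficient-wise on the expansion and in particular commutes with extraction of the outer-constant term. Therefore $T_{k,N}(Z'f)=Z_{k+1}(T_{k,N}f)$, which combined with the previous step gives the identity (read, as usual, with the left-hand $T$ taken from whatever bidegree space $Z_{N+1}f$ lies in).

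The only point requiring a little care is the bookkeeping of the two index ranges together with the elementary fact that multiplication by $x_j$ and $\partial/\partial x_j$ with $j\le k$ commute with the evaluation $x_i\mapsto0$ for $i>k$, since differentiation in one variable commutes with evaluation in a disjoint set of variables. There is no representation-theoretic content here; as the authors indicate, the statement is immediate once the operators are written out explicitly.
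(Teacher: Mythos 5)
Your proof is correct and is exactly the verification the authors had in mind: the paper gives no argument at all, simply declaring the lemma ``obvious,'' and your decomposition of $Z_{N+1}$ into the inner part $Z_{k+1}$ plus outer summands each carrying an explicit factor of a variable killed by the evaluation is the standard way to see it. You also correctly handle the one point worth flagging, namely that $Z_{N+1}f$ lives in a shifted bidegree space so the left-hand $T$ must be read accordingly.
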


Using the classification of the irreducible finite dimensional representations
of $SL(2,\mathbb{C})$ we see that on the spin $\frac{m}{2}$ (i.e. $m+1$
dimensional representation), $F^{m}$, the operator $L$ acts by $\frac
{(m+1)^{2}}{4}I$. Thus in a representation, $W$, for $SL(2,\mathbb{C)}$ that
is a sum with multiplicity of $F^{0},F^{2},...,F^{2r}$ then the projection of
$W$ onto the invariants is given by%
\[
\frac{(-1)^{r}}{r!(r+1)!}%
{\displaystyle\prod\limits_{j=1}^{r}}
(YX+\frac{1}{4}(H+1)^{2}-\frac{(2j+1)^{2}}{4}).
\]
Using this, taking $W$ as above we get the main result of this section:
\begin{proposition}\label{sl2}
If $W_{0}=\{w\in W|Hw=0\}$ then the space of $SL(2,\mathbb{C})$ invariants is
\[
{\displaystyle\prod\limits_{j=1}^{k}}
(YX-j(j+1))W_{0}.
\]
\end{proposition}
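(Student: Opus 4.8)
The plan is to forget the qubit origin of $W$ and prove the statement as a fact about finite-dimensional $SL(2,\mathbb C)$-modules, using the eigenvalue of $L=YX+\tfrac14(H+1)^2$ recorded just above (it acts on the $(2m+1)$-dimensional irreducible $F^{2m}$ by $\tfrac{(2m+1)^2}{4}$). Write $W=\bigoplus_{m=0}^{r}(F^{2m})^{\oplus c_m}$; only even ``spins'' occur because every weight of $W$ is a difference $p-q$ of a bidegree with $p+q=k=2r$, hence even. The space $W^{SL(2,\mathbb C)}$ of invariants is exactly the zero-weight space of the trivial isotypic component $(F^{0})^{\oplus c_0}$: a vector fixed by $SL(2,\mathbb C)$ is annihilated by $X$ and $Y$ and therefore spans a copy of $F^{0}$, and conversely. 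So it suffices to exhibit an operator, polynomial in $YX$, that on $W_{0}=\ker H$ is invertible on the $m=0$ part and zero on all the $m\ge 1$ parts.

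First I would pin down how $YX$ acts on the zero-weight space of a single $F^{2m}$. Since $[H,YX]=0$, the operator $YX$ preserves $W_{0}$; restricting $L=YX+\tfrac14(H+1)^2$ to the hyperplane $H=0$ and using the eigenvalue above gives
\[
YX\big|_{F^{2m}\cap W_{0}}=\tfrac{(2m+1)^2}{4}-\tfrac14=m(m+1)
\]
(equivalently this follows from the standard lowering/raising identities on the spin-$m$ irreducible). Hence $YX-j(j+1)$ acts on the zero-weight line of each copy of $F^{2m}$ by the scalar $m(m+1)-j(j+1)=(m-j)(m+j+1)$, which — $m,j$ being nonnegative integers with $m+j+1>0$ — vanishes if and only if $j=m$.

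Then I would evaluate $\prod_{j=1}^{k}(YX-j(j+1))$ on $W_{0}$. Since $k=2r\ge r$, the index set $\{1,\dots,k\}$ contains every $m\in\{1,\dots,r\}$, so for each such $m$ the product has the factor $YX-m(m+1)$ and therefore kills the zero-weight space of $(F^{2m})^{\oplus c_m}$. On the zero-weight space of $(F^{0})^{\oplus c_0}$ each factor is multiplication by $-j(j+1)\neq 0$, so the product is multiplication by the nonzero constant $\prod_{j=1}^{k}\bigl(-j(j+1)\bigr)$. As $W_{0}$ is the direct sum of these zero-weight spaces, $\prod_{j=1}^{k}(YX-j(j+1))W_{0}$ is precisely the zero-weight space of $(F^{0})^{\oplus c_0}$, that is $W^{SL(2,\mathbb C)}$, and both inclusions are now immediate. (This is just the $W_{0}$-restriction of the projection formula displayed above, with the product range harmlessly extended from $r$ to $k$.)

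The only points that need care are bookkeeping: verifying that $W$ has no odd-spin constituents (so $W_{0}$ meets each summand in one line and the invariants account for exactly the $c_0$ lines coming from $F^{0}$), and noting $\prod_{j=1}^{k}(-j(j+1))\neq 0$, which upgrades the map on the trivial isotypic zero-weight space from injective to bijective so that the image is all of $W^{SL(2,\mathbb C)}$ rather than a proper subspace. Neither is hard; the real content is the one-line eigenvalue computation $YX|_{F^{2m}\cap W_{0}}=m(m+1)$, after which the proposition is essentially Lagrange interpolation annihilating all the unwanted eigenvalues.
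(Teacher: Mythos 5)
Your proof is correct and follows essentially the same route the paper intends: it restricts the operator $L=YX+\tfrac14(H+1)^2$, which acts by $\tfrac{(2m+1)^2}{4}$ on each $F^{2m}$, to the zero-weight space $W_0$ to get $YX|_{F^{2m}\cap W_0}=m(m+1)$, and then observes that the product $\prod_{j=1}^{k}(YX-j(j+1))$ annihilates every nontrivial isotypic contribution to $W_0$ while acting by a nonzero scalar on the invariants. The paper leaves these details implicit (it simply invokes the displayed projection formula), so your write-up is a faithful filling-in of the same argument rather than a different one.
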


\subsubsection{Example: Unique SLIP of degree 6 in 5 qubits} 

Example the invariant in $5$ qubits of degree $6$. In $4$ qubits we take the
basis $u_{r}\otimes u_{s}$ with
\begin{align*}
&u_{0}=\frac{1}{\sqrt{2}}\left(  \left\vert 00\right\rangle +\left\vert
11\right\rangle \right)\;\;,\;\;
u_{1}=\frac{i}{\sqrt{2}}\left(  \left\vert 00\right\rangle -\left\vert
11\right\rangle \right)\;,\\
& u_{2}=\frac{i}{\sqrt{2}}\left(  \left\vert 01\right\rangle +\left\vert
10\right\rangle \right)\;\;,\;\;
u_{3}=\frac{1}{\sqrt{2}}\left(  \left\vert 00\right\rangle -\left\vert
11\right\rangle \right)  .
\end{align*}
That is, the Bell basis in $2$ qubits. These bases are orthonormal with
respect to both the Hilbert space inner product and ( suitably normalized) the
respectively $G_{2}$ or $G_{4}$-invariant complex bilinear, symmetric form. We
order the basis so that $v_{j}=$ $u_{j}\otimes u_{j}$, $j=0,1,2,3$.

Let $z=\sum_{r,s}z_{r,s}u_{r}u_{s}$,$Z=[z_{r,s}],$ $f(z)=\det[Z]$ and
$g(z)=tr((ZZ^{T})^{2})$. We order the basis so that $v_{j}=$ $u_{j}\otimes
u_{j}$, $j=0,1,2,3$ and consider $f$ to be $f(x_{0},...,x_{15})$ and $g$ to be
$g(y_{0},...,y_{15})$. We set
\[
w=\sum_{j}\frac{\partial f(x)}{\partial x_{j}}\frac{\partial g(y)}{\partial
y_{j}}%
\]
then this is an invariant polynomial for $G_{4}$ acting on two copies of $4$
qubits of bidegree $3,3$. We note that
\[
T_{4,16}(w)=4\left(  x_{0}x_{1}x_{2}y_{3}^{3}+x_{0}x_{1}y_{2}^{3}x_{3}%
+x_{0}y_{1}^{3}x_{2}x_{3}+y_{0}^{3}x_{1}x_{2}x_{3}\right)  .
\]
Set $$\phi(x,y)=x_{0}x_{1}x_{2}y_{3}^{3}+x_{0}x_{1}y_{2}^{3}x_{3}+x_{0}y_{1}%
^{3}x_{2}x_{3}+y_{0}^{3}x_{1}x_{2}x_{3}\;.$$ Then
$$
(Y_{4}X_{4}-12)(Y_{4}X_{4}-6)(Y_{4}X_{4}-2)\phi(x,y)
=36(-\phi(x,y)+\phi
(y,x)+\mu(x,y)-\mu(y,x))
$$
with
\begin{align*}
\mu(x,y)&=x_{0}y_{0}^{2}(y_{1}x_{2}x_{3}+x_{1}y_{2}x_{3}+y_{1}x_{2}x_{3})
+x_{1}y_{1}^{2}(y_{0}x_{2}x_{3}+x_{0}y_{2}x_{3}+x_{0}x_{2}y_{3})\\
&+x_{2}y_{2}^{2}(y_{0}x_{1}x_{3}+x_{0}y_{1}x_{3}+x_{0}x_{1}y_{3})
+x_{3}y_{3}^{2}(y_{0}x_{1}x_{2}+x_{0}y_{1}x_{2}+x_{0}x_{1}y_{2}).
\end{align*}
We have
\begin{corollary}
The following is up to scalar the unique degree $6$ invariant for $5$ qubits.%
\[
(Y_{16}X_{16}-12)(Y_{16}X_{16}-6)(Y_{16}X_{16}-2)w(x,y).
\]
\end{corollary}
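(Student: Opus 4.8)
The plan is to combine the dimension count of section~(vi) with the explicit invariant--projection of Proposition~\ref{sl2}. First I would record that for $n=5$, $k=6$ one has $r=3$ and $\lambda=(3,3)$, and the Proposition of section~(vi) (equivalently the formula for $d(6,n)$) evaluates to $d(6,5)=1$. Thus the space of homogeneous SLIPs of degree $6$ in five qubits is one--dimensional, and it suffices to exhibit a \emph{single} nonzero $G_5$--invariant of degree $6$ of the displayed form; uniqueness up to scalar is then automatic.

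The construction proceeds by viewing $\mathcal{H}_5=|0\rangle\otimes\mathcal{H}_4\oplus|1\rangle\otimes\mathcal{H}_4$ and fixing on $\mathcal{H}_4$ the Bell--squared basis $v_j=u_j\otimes u_j$, completed to an orthonormal basis $v_0,\dots,v_{15}$ that is orthonormal both for the Hilbert inner product and, after normalization, for the unique $G_4$--invariant symmetric bilinear form $(\cdot,\cdot)_4$. In this basis each element of $G_4$, and likewise each local factor restricted to either side of the $2|2$ cut, acts by a complex orthogonal matrix, and $(\cdot,\cdot)_4$ becomes the standard form $\sum_j a_jb_j$. I would then take the two degree--$4$ $G_4$--invariants $f(z)=\det[Z]$ and $g(z)=\mathrm{tr}((ZZ^T)^2)$, where $Z=[z_{rs}]$ is the coefficient matrix across the $2|2$ cut; their $G_4$--invariance is immediate from the orthogonality just noted. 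Because $G_4$ acts orthogonally in this basis, the gradients $\nabla f$ and $\nabla g$ are $G_4$--equivariant (they transform like the underlying vector), so the contraction $w(x,y)=\sum_j \partial_{x_j}f(x)\,\partial_{y_j}g(y)$ is a $G_4$--invariant polynomial on two copies of $\mathcal{H}_4$, of bidegree $(3,3)$, and hence lies in the weight--zero space $W_0$ for the fifth $SL(2,\mathbb{C})$ factor (indeed $Hw=(3-3)w=0$). Since $P^6(\mathcal{H}_5)$, as a module over the fifth factor, has $H$--weights only in $\{-6,\dots,6\}$, it contains only spins $0,1,2,3$, so Proposition~\ref{sl2} with the product running over $j=1,2,3$ (note $j(j+1)=2,6,12$) tells us that on $W_0$ the operator $(YX-2)(YX-6)(YX-12)$ is, up to a nonzero constant, the projection onto the $SL(2,\mathbb{C})$--invariants. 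As this operator commutes with $G_4$ and $w$ is already $G_4$--invariant, $F:=(Y_{16}X_{16}-12)(Y_{16}X_{16}-6)(Y_{16}X_{16}-2)\,w$ is invariant under all of $G_5$ and homogeneous of degree $6$.

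It remains to check $F\neq0$. Here I would exploit the restriction map $T_{4,16}$ of the Lemma above, which sets twelve of the sixteen coordinates in each copy to zero and intertwines $X,Y,H$ with $X_4,Y_4,H_4$; hence it commutes with the product operator and $T_{4,16}(F)=(Y_4X_4-12)(Y_4X_4-6)(Y_4X_4-2)\,T_{4,16}(w)$. Evaluating $\nabla f$ and $\nabla g$ at a diagonal coefficient matrix (cofactor expansion for the determinant, and $\partial\,\mathrm{tr}((ZZ^T)^2)/\partial Z = 4\,ZZ^TZ$ for the second) gives $T_{4,16}(w)=4\phi(x,y)$ with $\phi$ as displayed, and carrying out the three successive applications of $Y_4X_4$ yields $36(-\phi(x,y)+\phi(y,x)+\mu(x,y)-\mu(y,x))$, which is manifestly nonzero---for instance the monomial $x_0y_0^2y_1x_2x_3$ occurs in $\mu$ and is not cancelled. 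Therefore $F\neq0$, and combined with $d(6,5)=1$ this is the unique degree--$6$ invariant of five qubits up to scalar.

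I expect the main obstacle to lie in the last step: verifying that the threefold application of $Y_4X_4$ to $\phi$ does not collapse to zero forces one to track, weight slice by weight slice, exactly which bidegree--$(3,3)$ monomials are produced and cancelled by the raising--lowering operator, and to match the eigenvalues $2,6,12$ against the spins actually occurring in the $SL(2,\mathbb{C})$--module generated by $\phi$. A secondary, but essential, point is to justify that $f=\det[Z]$ and $g=\mathrm{tr}((ZZ^T)^2)$ are genuinely $G_4$--invariant and that the gradient contraction inherits the invariance; this rests entirely on the orthogonality of the local action in the Bell--squared basis, so that fact should be stated before the contraction argument is run.
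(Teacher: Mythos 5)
Your proposal is correct and follows essentially the same route as the paper: the dimension count $d(6,5)=1$ reduces everything to exhibiting one nonzero degree-$6$ invariant, and nonvanishing is verified by restricting via $T_{4,16}$ to the diagonal coordinates and computing $(Y_4X_4-12)(Y_4X_4-6)(Y_4X_4-2)\phi = 36(-\phi(x,y)+\phi(y,x)+\mu(x,y)-\mu(y,x))\neq 0$. The supporting points you make explicit (orthogonality of the $G_4$-action in the Bell-squared basis, hence $G_4$-invariance of the gradient contraction $w$, and the weight bound restricting the fifth factor to spins $\le 3$ so that the three eigenvalues $2,6,12$ suffice) are left implicit in the paper's two-line proof but are exactly the right justifications.
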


\begin{proof}
We know that the space of such invariants is dimension 1 and since the
retriction of the element in the statement is non-zero by the above it must be
the invaraint.
\end{proof}

\section{(viii) Mathematica code for the case of qubits}

The purpose of this section is to give a listing of Mathematica code to
calculate $d_{n,k}$ as a function of $n$ if $k$ is fixed (here we look only at
qubits). In the body of the letter we give two examples. Using the code below
and taking $k=10$ we have %
\[
272160+28448(-3)^{n}+766080(-1)^{n}+338751(2)^{n}+14175(-1)^{n}2^{2+n}+
\]
\[
11200(3)^{n+1}+35(2)^{n+1}3^{n+2}+315(-1)^{n}4^{n+3}+189(-2)^{n}%
5^{n+1}++45(14)^{n}+42^{n}%
\]
over
\[
10!=3628800.
\]
The calculation was almost instantaneous. In our code below we will output the
expression in powers of $q$. We use the notation in subsection v of this
appendix. We first explain the method used in the code. If $k=2r$ then the
code implements (the sum below is on the partitions of $k$)%
\[
\frac{1}{k!}\sum_{\mu}(a_{(r,r),\mu})^{q}|c_{\mu}|.
\]
One can read this off of tables for small values of $k$ or use a standard
mathematical package. However, the included code is designed to efficiently do
this specific calculation. It should be easily converted to a
lower level language (such as C or C++). However, the output becomes immense
for $k$ larger than say $30$ (a calculation that takes about 1 minute on a 4
year old PC). The code below has two main functions.

Mult[k,n] which calculates $d_{n,k}$ for specific values of $k$ and $n$.

Multq[k] which calculates $d_{q,k}$ as a function of $q$ with $k$ fixed.

The algorithm for calculating $a_{(r,r),\mu}$ uses Theorem 9.1.4 in~\cite{GW} with
the \textquotedblleft$n$\textquotedblright\ in that theorem equal to $2.$And
the fixed point calculation in the formula given by the function FP[p,q,x]
\ (here $x$ corresponds to a conjugacy class). The number  $|c_{\mu}|$ is
calculated in the function CST[x]. The main weakness in the code is the
generation of the partitions Prt[k]. Since the number of partitions of is
$O(e^{C\sqrt{k}})$ this will be a bottleneck no matter what one does to
streamline the code.

The Mathematica code:
\begin{lstlisting}
P[n_, k_] :=
     Module[{L, M = {}, S, i, j}, If[k > n, Return[P[n, n]]]; If[k == 0, 
    Return[{{}}]];
    If[k == 1, Return[{Table[1, {i, 1, n}]}]];
    For[i = 1, i <= k, i++, L = P[n - i, i]; S = {};
      For[
      j = 1, j <= Length[L], j++, S =
         Append[S, Prepend[L[[j]], i]]]; M = Union[M, S]]; M]

(*Partions of n in Lex order *)		 
PRT[n_] := P[n, n]
(* Calculates the order of the fixed point set of x on S_{p+q}/S_p S_q *)
FP[p_, q_, x_] := Module[{a = x[[1]], r = p, s = 
    q}, If[Length[x] == 0, Return[0]]; If[p < q, r = q; s = 
    p]; If[s == 0, Return[1]]; If[Max[x] == 1, Return[Binomial[r + s, r]]]; 
	If[a > r, Return[0]];
    If[a <= s, Return[FP[r - a, s, Delete[x, 1]] + FP[r, s - a, 
    Delete[x, 1]]], Return[FP[r - a, s, Delete[x, 1]]]]]
(* If x is a partition of n then n!/CSTD[x] is the size of the conjugacy class of x *)	
CSTD[x_] := Module[{y,
   j, i, k}, If[x == {}, Return[1]]; If[Length[x] == 1, Return[x[[1]]]];
    If[x[[1]] > x[[2]], Return[x[[1]]CSTD[Delete[x, 1]]]];
       If[Length[x] == 2, Return[2*x[[1]]^2]];
    k = 2; y = Delete[x, 1]; y = Delete[y, 1];
    For[j = 3, j <= Length[x], j++, If[y[[1]] == x[[1]], k++; y = 
    Delete[y, 1], Break[]]];
    If[k == Length[x], Return[x[[1]]^k k!], Return[x[[1]]^k k!CSTD[y]]]]
	
(* This calculates the dimension of the invariants of degree k in n qubits *)	

Mult[k_, n_] := Module[{L, M, i}, If[Mod[k, 2] == 1, Return[0]];
    L = PRT[k];
    Sum[(FP[k/2, k/2, L[[i]]] - FP[k/2 + 1, k/2 - 1, L[[
    i]]])^n/CSTD[L[[i]]], {i, 1, Length[L]}]]
	
(* This calculates the dimension of the invariants of degree k in q qubits as a function of q *)
Multq[k_] := Module[{L, M, i, r, f = 0}, 
	If[Mod[k, 2] == 1, Return[0]];
    L = PRT[k];
    For[i = 1, i <= Length[L], i++, 
	r = FP[k/2, k/2, L[[i]]] - FP[k/2 + 1, k/2 - 1, L[[i]]]; 
	If [r != 0, f = f + r^q/CSTD[L[[i]]]]]; 
	f]
\end{lstlisting}

\end{document}